\newtheorem{thm}{Theorem}
\newtheorem{cor}{Corollary}
\newtheorem{defn}{Definition}
\newcommand{\domain}{\Real^{n_d}}
\newcommand{\subD}{\mathcal{D}}
\newcommand{\HspaceN}[1]{\mathcal{H}^{#1}}
\newcommand{\safeS}{\mathcal{S}}
\renewcommand{\kernMap}{{\vec{k}}}
\newcommand{\RBF}{\varphi}
\newcommand{\slkmFunc}{f_{\text{KM}}}
\newcommand{\sampSetSafe}{\sampSet^{\text{s}}}
\newcommand{\sampSetUnsafe}{\sampSet^{\text{u}}}
\newcommand{\setSafe}{\mathcal{S}^*}
\newcommand{\setUnsafe}{\mathcal{U}^*}
\newcommand{\tsafe}{\textit{safe}}
\newcommand{\tunsafe}{\textit{unsafe}}
\newcommand{\coverSet}{{\rm{C}}}
\newcommand{\coverSetClosed}{{\bar {\rm{C}}}}
\newcommand{\numPoly}{n_p}
\newcommand{\numKernPoly}{N_p}      % Max polynomial order.
\newcommand{\numKernRBF}{N_b}       % Number of kernel function types for RBF.
\newcommand{\SKMBFlong}{shallow kernel machine-zeroing barrier function}
\newcommand{\SKMBF}{SKM-ZBF}
\newcommand{\ZBFlong}{zeroing barrier function}
\newcommand{\ZBF}{ZBF}
\pgfplotsset{compat=1.15}
\title{\LARGE \bf Geometry of Radial Basis Neural Networks for \\ 
Safety Biased Approximation of Unsafe Regions}
\author{Ahmad Abuaish$^{\dagger}$, Mohit Srinivasan$^{\ddagger}$,
  Patricio A. Vela$^{\dagger}$%
\thanks{This work was supported in part by the National Science
Foundation under Award S\&AS \#1849333, by DARPA PAI, and by KACST Fellowship.}%
\thanks{$^{\dagger}$ School of Electrical and Computer Engineering,
        Georgia Institute of Technology, Atlanta, USA
        {\tt\small aabuaish@gatech.edu, pvela@gatech.edu}}%
\thanks{$^{\ddagger}$ Ford Motor Company, Dearborn, USA
        {\tt\small mohit.s@ieee.org}}%
}
\begin{document}

\maketitle
\thispagestyle{empty}
\pagestyle{empty}

%%%%%%%%%%%%%%%%%%%%%%%%%%%%%%%%%%%%%%%%%%%%%%%%%%%%%%%%%%%%%%%%%%%%%%%%%%%%%%%%
\begin{abstract}
Barrier function-based inequality constraints are a means to enforce safety specifications for control systems. When used in conjunction with a convex optimization program, they provide a computationally efficient method to enforce safety for the general class of control-affine systems. 
One of the main assumptions when taking this approach is the \textit{a priori} knowledge of the barrier function itself, i.e., knowledge of the safe set. In the context of navigation through unknown environments where the
locally safe set evolves with time, such knowledge does not exist. 
This manuscript focuses on the synthesis of a zeroing barrier function characterizing the safe set based on safe and unsafe sample measurements, e.g., from perception data in navigation applications. 
Prior work formulated a supervised machine learning algorithm whose solution guaranteed the construction of a zeroing barrier function with specific level-set properties. However, it did not explore the geometry of the
neural network design used for the synthesis process. This manuscript
describes the specific geometry of the neural network used for zeroing barrier function synthesis, and shows how the network provides the necessary representation for splitting the state space into \textit{safe} and \textit{unsafe} regions.
\end{abstract}

%%%%%%%%%%%%%%%%%%%%%%%%%%%%%%%%%%%%%%%%%%%%%%%%%%%%%%%%%%%%%%%%%%%%%%%%%%%%%%%%
\section{Introduction \label{sec:intro}}
Invariant set-based safe control synthesis~\cite{blanchini1999set} has become a favorable technique to enforce safety, as it provides theoretical guarantees when used to augment base controllers \cite{Aa[2019]bf}. For closed dynamical systems, the invariant set is represented by the zero and positive level-sets of a continuously differentiable implicit function, typically termed \emph{zeroing barrier function} (ZBF). Subsequently, for controlled systems, \emph{control barrier functions} (CBFs) are introduced
% that necessitate a permissible control manifold, together, 
to represent the invariant set.
In safety-critical applications, CBF-based controllers are designed to render safe regions in the state space a positively invariant set.
%while minimally modifying a nominal control policy.
Prevailing use of CBFs is in a point-wise optimization program solved via quadratic programming (QP) \cite{Aa[2017]bf}. CBF-based QPs have been used in a wide range of applications in robotics such as collision avoidance~\cite{wang_CBFs_multirobot, wang_quadcopters}, multi-robot coordination and task satisfaction~\cite{multirobot_TRO, CBFs_LTL_TRO, lindemann_coupled}, and automotive applications~\cite{Aa[2017]bf}. %Recently, a lifting strategy of the CBF to an augmented system state, termed \emph{barrier state} (BaS), was proposed, which allows safe control synthesis for a full trajectory via optimal control \cite{HaEv[2022]bas}.
%For a detailed treatment and historical development
%of CBF-based control synthesis, please consult \cite{Aa[2019]bf,Aa[2017]bf}.

Usually, a CBF is handcrafted based on \textit{a priori} knowledge of the safe regions in the state space. However, there are applications where the safe regions evolve with time, with navigation being one. In these applications, it is critical to synthesize the CBF online using sensor measurements. Unfortunately, determining the true invariant region defined by a CBF is generally an NP-hard problem, but some techniques exist that can estimate the invariant region \cite{YuAa[2021]CBF}. Since the existence of a zeroing barrier function is needed to formally confirm the existence of a CBF,  this manuscript solely focuses on zeroing barrier function synthesis to separate state space into \emph{safe} and \emph{unsafe} regions.
%Note that we omit the term ``control" from barrier function to emphasize that we only generate the function that separates the safe and unsafe regions and we do not focus on the control policy synthesis itself.
The {\ZBF} is constructed from a two-layer kernel machine network trained from a labeled dataset of {\tsafe} and {\tunsafe} samples. 
%For example, the safe and unsafe data points can be obtained from range sensor measurements in the context of navigation. 
Further, the geometry of the kernel functions in the network is explored to efficiently partition the space. The approach is motivated by the Kolmogorov–Arnold representation theorem, which implies that two-layer neural networks may be capable of approximating continuous functions \cite{Jo[2021]kathm}.

Recently, several data-driven approaches for constructing a CBF were proposed to account for uncertainties in either the system dynamics, unsafe regions, or both.  
One approach category for learning CBF's involves supervised offline learning. Instances include 
imitation learning where training data is generated by an expert actor or optimal control simulations \cite{Al[2020]cbf-oc, YuCh[2021]cbf-imitation}. The offline nature lacks the ability to accommodate real-time changes in the environment. In contrast, \emph{self-supervised} approaches permit online learning. 
Initial work on self-supervised Bayesian learning system of uncertain dynamics \cite{ViNi[2021]cbf-dyn} with known barrier functions was  merged with \cite{KeNi[2022]cbf_sdf} to learn the system dynamics and an implicit function representation of the unsafe region \cite{KeNi[2022]cbf-dyn-sdf}.  In \cite{KeNi[2022]cbf_sdf}, a signed distance function representing obstacles is modeled as a deep neural network trained from range sensor data via stochastic gradient descent (SGD) with replay memory. Similarly, \cite{FaLi[2016]cbf} presented the construction of a probabilistic occupancy map from a kernel-based logistic regression model trained from range sensor data via SGD. However, there were no hard constraints on misclassifying unsafe data points in the underlying SGD optimization process, which nullifies any possible theoretical safety guarantee.

Our previous work focused on creating a {\ZBF} for navigation applications based on data collected from LiDAR sensors \cite{MoVe[2020]bf}.
A two-layer network with Gaussian radial basis functions (GRBFs) was synthesized from this data. The first layer used sparsely distributed (over the domain) GRBF centers, while the second GRBF layer was learnt during the kernel support vector machine (kSVM) optimization process. The kSVM optimization specifications provided formal guarantees regarding the partitioning of the domain into {\tsafe} and {\tunsafe} regions. However, the work did not discuss the geometry and associated properties of the GRBF network. This work analyzes the geometry of the two-layer network and the structure of the optimization problem to prove the existence of a {\ZBF} with known partitioning properties.
%we construct a two-layer kernel machine network to realize the BF from safe and unsafe training data points. The geometry of the kernel functions

%Although there was a memory budget advantage for training the model via SGD, as the training data are discarded after performing the SGD step,

%Our work is motivated by the Kolmogorov–Arnold representation theorem which states that a finite two-layer neural network is sufficient to represent any continuous function \cite{Jo[2021]kathm}. Consequently, this manuscript shows the efficacy of the proposed two-layer kernel machine network in partitioning the domain into safe and unsafe regions via a barrier function. 
%A specific level-set of the barrier function represents the boundary between the two regions.

The manuscript organization is as follows: Section II discusses the geometry of Gaussian kernel functions with respect to the first layer. 
Section III covers the second layer construction, geometry, and optimization specifications. 
Section IV discusses the qualification of the two-layer kernel machine network to be a {\ZBFlong} along with a kernel basis selection strategy. 
Sections V and VI present case studies and concluding remarks, respectively.

%%%%%%%%%%%%%%%%%%%%%%%%%%%%%%%%%%%%%%%%%%%%%%%%%%%%%%%%%%%%%%%%%%%%%%%%%%%%%%%%
\section{Geometry of the Kernel Hilbert Space\label{sec:mapGeom}}
GRBF Neural Networks (GRBF-NNs)  have several properties ideal for {\ZBFlong} creation.
First they are universal approximators \cite{JW[1991]GRBF},
second they exhibit locality \cite{Mc[2007]GRBF}, and third they
partition space.  
Since the last property is less frequently mentioned, but often used, this section devotes attention to space partitioning, as it is essential for data-driven safe set generation. 

The focus of this section is on GRBF-NNs as kernel machines whose kernel functions are radial basis functions (RBFs), which generally take the following form,
\begin{equation}
  \kernFunc(x_i, x_j) = \RBF(\norm{x_i - x_j}; \sigma),\quad\text{for}\quad
    \function{\RBF}{\Real^+}{\Real^+},
\end{equation}
where $x_i,x_j\in\Real^{n_d}$, $\norm{\cdot}$ is a norm and $\sigma$ is the scalar bandwidth, which influences the
sensitivity of the basis function $\RBF$. A kernel machine based on radial
basis functions generates a mapping through the use of multiple kernel
function mappings with fixed argument elements $\centc_j$ in a
center set $\centerSet$. For $\centc_j \in \centerSet$ the
kernel mapping is
\begin{equation} \eqlabel{kernMap}
  \function{\kernMap}{\domain}{\HspaceN{\centerNum}} ,\ \ 
    x \mapsto [\kernFunc(x,\centc_1)\; \cdots \;\kernFunc(x,\centc_{\centerNum})]^T.
\end{equation}
where $\centerNum = \card{\centerSet}$ and $\HspaceN{\cdot}$ indicates
that the output space is a Hilbert space. For this kernel machine to
define a scalar function requires specifying 
$\alpha \in \Real^{\centerNum}$ such that
\begin{equation} \eqlabel{kmFunc}
  \slkmFunc(x) = \innprod{\alpha}{\kernMap(x)} = \alpha^T \kernMap(x).
\end{equation}
Kernel machines permit more general function classes than RBFs in $\Real^{n_d}$ (subsequent sections will use polynomial
kernel functions). That said, RBFs have geometric properties that are 
implicitly exploited in kernel machine learning applications.
Specializing to the case of the Gaussian radial basis function (or Gaussian kernel), let the kernel function be 
\begin{equation}
    k_G(x,c)=\exp\left(-\frac{\norm{x-c}^2}{\sigma^2}\right).
\end{equation}
\begin{thm}\label{th:ker-map}
The kernel mapping, with $\centerNum$ Gaussian kernels, maps the
input domain $\subD \subset \Real^{n_d}$ into a surface in the Hilbert
space $\HspaceN{\centerNum} \subset \Real^{\centerNum}$ when
$\centerNum \ge n_d + 1$ and there are $n_d + 1$ centers capable of defining a
coordinate system in $n_d$.
\end{thm}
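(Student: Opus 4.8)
The plan is to show that $\kernMap$ restricted to $\subD$ is a smooth, injective immersion into $\Real^{\centerNum}$; its image is then an $n_d$-dimensional submanifold of $\HspaceN{\centerNum}$ (injectively immersed, and embedded under mild regularity on $\subD$, e.g.\ compactness), i.e.\ a surface, and since $\centerNum \ge n_d+1 > n_d$ it is a genuine lower-dimensional surface rather than an open region. Three things must be verified: smoothness, that the differential has full rank $n_d$ everywhere on $\subD$ (immersion), and injectivity. Smoothness is immediate, since each coordinate $k_G(\cdot,\centc_j)=\exp(-\norm{\cdot-\centc_j}^2/\sigma^2)$ is $C^\infty$.

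For the immersion property I would compute the Jacobian $D\kernMap(x)$, an $\centerNum\times n_d$ matrix whose $j$th row is $-\tfrac{2}{\sigma^2}\,k_G(x,\centc_j)\,(x-\centc_j)^T$. Because the Gaussian kernel is strictly positive, scaling each row by the nonzero factor $-\tfrac{2}{\sigma^2}k_G(x,\centc_j)$ leaves the row space unchanged, so $D\kernMap(x)$ has the same rank as the matrix with rows $(x-\centc_j)^T$. Restricting attention to the distinguished $n_d+1$ centers that ``define a coordinate system'', i.e.\ are affinely independent, the differences $(x-\centc_i)-(x-\centc_j)=\centc_j-\centc_i$ span $\Real^{n_d}$ by hypothesis, hence so does $\{x-\centc_j\}_{j=1}^{n_d+1}$ for \emph{every} $x$. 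Thus this $(n_d+1)\times n_d$ submatrix already has rank $n_d$; the remaining $\centerNum-(n_d+1)$ rows cannot reduce the rank, and it cannot exceed $n_d$, so $D\kernMap(x)$ has rank $n_d$ on all of $\subD$.

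For injectivity, suppose $\kernMap(x)=\kernMap(y)$. Then $k_G(x,\centc_j)=k_G(y,\centc_j)$ for every $j$, and injectivity of $\exp$ gives $\norm{x-\centc_j}=\norm{y-\centc_j}$, so every center lies on the perpendicular bisector hyperplane of the segment joining $x$ and $y$. If $x\ne y$ that hyperplane is affine of dimension $n_d-1$ and cannot contain $n_d+1$ affinely independent points, contradicting the coordinate-system hypothesis. Equivalently, expanding $\norm{x-\centc_j}^2=\norm{y-\centc_j}^2$ and subtracting the $j$th equation from the first yields $\innprod{x-y}{\centc_j-\centc_1}=0$ for $j=2,\dots,n_d+1$; since these differences span $\Real^{n_d}$, this forces $x=y$.

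Combining the three facts, the constant-rank / local immersion theorem delivers the conclusion that $\kernMap(\subD)$ is an $n_d$-dimensional surface in $\HspaceN{\centerNum}\subset\Real^{\centerNum}$. I expect the main obstacle to be the rank argument: being careful that the unequal strictly positive weights $k_G(x,\centc_j)$ multiplying the vectors $(x-\centc_j)$ do not collapse the rank, and making explicit that it is the affine independence of the chosen $n_d+1$ centers (not merely the count $\centerNum\ge n_d+1$) that simultaneously supplies the immersion and the injectivity; the bookkeeping for the extra centers beyond $n_d+1$ is then routine.
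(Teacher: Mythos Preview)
Your argument is correct and rests on the same two ingredients as the paper's proof: injectivity of $\kernMap$ follows because equal Gaussian outputs force equal distances to each center, and $n_d+1$ affinely independent centers pin those distances to a single point; the image is then an $n_d$-dimensional surface because the map has rank $n_d$. The packaging differs somewhat. The paper phrases injectivity geometrically---each coordinate's pre-image is a sphere about $c_j$, and the intersection of $n_d+1$ such spheres (with centers in general position) is a single point, a trilateration fact it imports from rigid-body geometry---and then simply asserts the rank statement. You instead give a self-contained algebraic version: the perpendicular-bisector/inner-product computation for injectivity, and an explicit Jacobian calculation (row-scaling by the nonzero Gaussian weights, then row-differencing to expose the vectors $c_j-c_1$) for the immersion property. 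Your treatment is more complete on the rank side, which the paper leaves implicit; the paper's sphere-intersection picture, on the other hand, makes the geometric content of the ``coordinate system'' hypothesis more vivid and connects it to the familiar localization literature.
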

\begin{proof}
Showing that the kernel mapping is 1-1 establishes this property.
The pre-image, $\kernFunc^{-1}(\cdot, c_i)$, of each coordinate's kernel
mapping is a sphere in the input space. The intersection of all spheres
for all coordinate mappings establishes the pre-image point, which is
unique only if the intersection is unique. 
Finite solutions for the intersections has been proven in the context of
rigid body geometry for $\centerNum = n_d$  \cite{LyPa[2017]ModRobo},
with $\centerNum \ge n_d + 1$ necessary for a single valid solution.  The
requirement for the centers is that they lead to a basis of $n_d$
vectors when using one of the points as the origin and using 
$n_d$ other points to obtain the basis vectors relative to that
origin.  Each pre-image imposes a constraint on the degrees of freedom
of the input point $x \in \domain$, such that the $n_d+1$ intersecting
pre-image spheres do so at a single point. 
When $\centerNum > n_d+1$, the additional pre-image constraints are
redundant and effectively impose no constraints. The rank of the mapping is 
$n_d < \centerNum$, hence it maps to a surface of dimension $n_d$.
\end{proof}

Theorem \ref{th:ker-map} applies to any RBF with infinite support; an RBF with
finite support will have similar properties but will include an $\epsilon$-covering constraint.
Basis functions generated from other norms also have similar properties
but may require more centers to induce a 1-1 mapping.
If the kernel is differentiable, then the 1-1 mapping is an embedding. Safe set generation with an appropriate kernel mapping
will involve defining the concept of a {\em kernel
embedding}\footnote{Not to be confused with the \textit{kernel mean
embedding} which is for probability distributions \cite{Kr[2017]kerMean}.}  
and its associated {\em kernel embedding inducing} data set.

\begin{defn}
The kernel mapping
$\function{\kernMap}{\domain}{\HspaceN{\centerNum}}$ 
is a {\em kernel embedding} if the center set
$\centerSet \subset \subD$ 
generating the kernel mapping are such that it is 1-1 and the kernel
function $\function{\kernFunc}{\domain \times \domain}{\Real}$ is
differentiable. The set of centers is called the 
{\em kernel embedding inducing set} (KEI set).
\end{defn}

\begin{cor}
Consider a finite set of points $\sampSet_p \subset \Real^2$ with an
associated triangulation.  Under a kernel embedding, each triangular
region maps to a surface in $\HspaceN{\centerNum}$ homeomorphic to a 2-simplex.
\end{cor}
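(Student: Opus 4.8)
The plan is to reduce the claim to the standard topological fact that a continuous injection from a compact space into a Hausdorff space is a homeomorphism onto its image, and then combine it with the properties of the kernel embedding already recorded above. First I would fix one triangular region $T$ of the triangulation, regarded as a closed, filled triangle in $\Real^2$. Such a $T$ is compact and is homeomorphic — in fact affinely isomorphic — to the standard $2$-simplex $\Delta^2$. For the statement to be well posed I would also make explicit the hypotheses that are implicit in ``kernel embedding'': the point set $\sampSet_p$ and its triangulation lie in the domain $\subD$ on which $\kernMap$ acts, and the center set $\centerSet$ is a KEI set (so, with $n_d = 2$, it contains three centers defining a coordinate system), which by Theorem~\ref{th:ker-map} makes $\kernMap$ injective on $\subD$, and the Gaussian kernel is differentiable, hence $\kernMap$ is continuous.

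Next I would restrict $\kernMap$ to $T$. Continuity of $\kernMap$ gives that $\kernMap|_T : T \to \kernMap(T)$ is a continuous bijection onto its image, and that image sits inside $\HspaceN{\centerNum} \subset \Real^{\centerNum}$, a metric — hence Hausdorff — space. Invoking the compact-to-Hausdorff lemma, $\kernMap|_T$ is a closed map and therefore a homeomorphism onto $\kernMap(T)$; equivalently, the remark in the text that a differentiable $1$-$1$ kernel mapping is an embedding already implies its restriction to the subspace $T$ is a topological embedding. Composing the homeomorphism $T \cong \Delta^2$ with $\kernMap|_T$ yields $\kernMap(T) \cong \Delta^2$. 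Finally, Theorem~\ref{th:ker-map} places $\kernMap(\subD)$ — and hence each $\kernMap(T)$ — on a surface of dimension $n_d = 2$ in $\HspaceN{\centerNum}$, which justifies calling $\kernMap(T)$ a surface homeomorphic to a $2$-simplex.

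I do not expect a substantive obstacle here; the proof is essentially a packaging of compactness. The points that need care are (i) stating the domain-containment and KEI-set hypotheses so that injectivity of $\kernMap$ actually holds on every $T$, and (ii) being precise that the corollary concerns \emph{closed} triangles, so that boundary edges and vertices are included — this is exactly why the compact-to-Hausdorff argument, rather than an open-map argument, is the natural tool. As an optional strengthening I would add the observation that two adjacent triangles sharing an edge map to images sharing the image of that edge (since $\kernMap$ is globally injective), so the whole triangulation maps to a simplicial-complex-like piecewise-smooth surface in $\HspaceN{\centerNum}$; the stated corollary, however, only requires the per-triangle conclusion.
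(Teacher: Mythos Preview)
Your proposal is correct. The paper states this corollary without proof, treating it as an immediate consequence of Theorem~\ref{th:ker-map} and the remark that a differentiable 1--1 kernel mapping is an embedding; your compact-to-Hausdorff argument is exactly the standard way to make that implicit step rigorous, so you are supplying the details the paper omits rather than taking a different route.
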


\begin{cor}
Consider a finite set of points $\sampSet_p \subset \Real^3$ with an
associated tetrahedralization.  Under a kernel embedding, each
tetrahedron maps to a surface in $\HspaceN{\centerNum}$ homeomorphic to
a 3-simplex.
\end{cor}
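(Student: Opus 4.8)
The plan is to follow the proof of the preceding ($\Real^2$) corollary essentially verbatim, with triangles replaced by tetrahedra and $2$-simplices by $3$-simplices, the whole argument resting on one fact: restricted to any compact set, a kernel embedding is a homeomorphism onto its image. As in that statement, I assume $\sampSet_p$ — and hence the convex hull carried by its tetrahedralization — lies in $\subD$.

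First I would apply Theorem~\ref{th:ker-map} with $n_d = 3$: when the KEI set generating $\kernMap$ has $\centerNum \ge 4$ centers, four of which define a coordinate frame of $\Real^3$, the mapping $\kernMap$ is $1$-$1$ on $\subD$. (Concretely, for the Gaussian kernel the value $k_G(x,\centc_i)$ pins down $\norm{x - \centc_i}$, so $\kernMap(x)$ encodes the squared distances $\norm{x - \centc_i}^2$ to four affinely spanning centers; subtracting one of these equations from the other three leaves a nonsingular linear system for $x$ — the trilateration underlying Theorem~\ref{th:ker-map}.) Since $k_G$ is differentiable, the $1$-$1$ map $\kernMap$ is an embedding, and $\kernMap(\subD)$ is the $3$-dimensional surface in $\HspaceN{\centerNum}$ asserted by the theorem.

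Next, fix a single closed, nondegenerate tetrahedron $T$ of the tetrahedralization. Being the convex hull of four affinely independent points, $T$ admits an affine isomorphism $h\colon \Delta^3 \to T$ from the standard $3$-simplex, which is in particular a homeomorphism. The restriction $\kernMap|_{T}$ is continuous and injective, $T$ is compact, and $\HspaceN{\centerNum}$ is Hausdorff, so the elementary fact that a continuous injection of a compact space into a Hausdorff space is a homeomorphism onto its image gives that $\kernMap|_{T}\colon T \to \kernMap(T)$ is a homeomorphism. Composing, $\kernMap|_{T}\circ h\colon \Delta^3 \to \kernMap(T)$ is a homeomorphism, so $\kernMap(T)$ is a piece of the $3$-dimensional surface $\kernMap(\subD)$ that is homeomorphic to a $3$-simplex — the claim for $T$. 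Applying this to every tetrahedron yields the corollary; moreover, two adjacent tetrahedra sharing a triangular face in $\Real^3$ have images sharing exactly the $\kernMap$-image of that face, because $\kernMap$ is one globally defined injective map, so the simplex images assemble into a complex combinatorially identical to the tetrahedralization.

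The only step needing real care is upgrading ``$\kernMap$ is a differentiable injective immersion'' to ``$\kernMap|_{T}$ is a homeomorphism onto its image'': an injective immersion need not be an embedding in general, and what rescues us is exactly the compactness of each closed tetrahedron, which turns the continuous bijection $T \to \kernMap(T)$ into a homeomorphism with no properness hypothesis. The rest — affine triviality of a nondegenerate tetrahedron and consistency on shared faces — is routine, and the proof is the $\Real^2$ argument with every dimension raised by one.
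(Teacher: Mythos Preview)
Your argument is correct, but there is nothing in the paper to compare it to: both this corollary and the preceding $\Real^2$ one are stated without proof. The paper treats them as immediate consequences of the kernel embedding definition (injectivity plus differentiability of $\kernFunc$), so your opening line about ``following the proof of the preceding corollary essentially verbatim'' references a proof that does not exist in the text.

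What you have written is exactly the standard way to supply the missing details, and it is sound. The one step you rightly flag as the only nontrivial point---that a continuous injection from a compact space into a Hausdorff space is a homeomorphism onto its image---is precisely what the paper is tacitly relying on when it asserts the corollary. Your remark about adjacent tetrahedra sharing the image of their common face is a nice bonus the paper does not mention. One minor quibble: you call $\kernMap$ an ``injective immersion,'' but the paper's Definition~1 only demands $1$--$1$ plus a differentiable kernel, not that the Jacobian has full rank; fortunately your compactness argument needs only continuity and injectivity, so this does not affect the proof.
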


If the domain $\subD \subset \domain$ is a set of disconnected regions
excluding points at infinity, then a kernel embedding will map to a set
of disconnected surfaces.  Similarly, a collection of non-intersecting
triangulations/tetrahedralizations maps to a set of disconnected
surfaces under a kernel embedding.

\subsubsection{Geometry of a Kernel Embedding}
The Gaussian kernel mapping outputs lie in the unit cube of 
$\HspaceN{\centerNum} \subset \Real^{\centerNum}$.
Each center $c_i$ in the set $\centerSet$ maximizes its associated
coordinate (evaluates to 1), which means that there is a neighborhood of
$c_i$ in $\subD$ for which this same coordinate is also maximal for all
points in the neighborhood.  Points tending to infinity map to the
origin in $\HspaceN{\centerNum}$ since the Gaussian radial basis
function tends to zero as the input radius tends to infinity.

\begin{cor} \corlabel{compactSurface}
For a kernel embedding defined using the Gaussian kernel, a compact
domain $\subD$ maps to a compact surface whose points lie outside of a
ball centered at the origin.
Furthermore, $\kernMap(\Real^{n_d}) \cup \{\vec 0\}$ is a compact surface.
\end{cor}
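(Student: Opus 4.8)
The plan is to lean on Theorem~\ref{th:ker-map}, which already gives that under the stated hypotheses the Gaussian kernel mapping $\kernMap$ is a differentiable, $1$--$1$ map of its domain onto an $n_d$-dimensional surface in $\HspaceN{\centerNum}$, i.e.\ an embedding. The first assertion is then almost immediate from continuity: $\kernMap$ is continuous (the Gaussian kernel is smooth), and the continuous image of a compact set is compact, so $\kernMap(\subD)$ is a compact surface (with boundary, should $\subD$ carry one).

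For the claim that $\kernMap(\subD)$ avoids a ball about the origin, I would use strict positivity together with compactness. Since $\centerSet\subset\subD$ and $\subD$ is compact, $R:=\max_{i}\sup_{x\in\subD}\norm{x-\centc_i}<\infty$, so for every $x\in\subD$,
\[
  \norm{\kernMap(x)}\ \ge\ k_G(x,\centc_1)\ =\ \exp\!\left(-\frac{\norm{x-\centc_1}^2}{\sigma^2}\right)\ \ge\ \exp\!\left(-\frac{R^2}{\sigma^2}\right)\ =:\ \rho\ >\ 0 ,
\]
so $\kernMap(\subD)$ lies in the complement of the open ball of radius $\rho$ centered at $\vec 0$.

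For the final sentence I would run a one-point-compactification argument. Extend $\kernMap$ to $\bar\kernMap\colon\domain\cup\{\infty\}\to\HspaceN{\centerNum}$ by $\bar\kernMap(\infty):=\vec 0$, where $\domain\cup\{\infty\}$ is given the topology of the sphere $S^{n_d}$. Continuity of $\bar\kernMap$ at $\infty$ follows from uniform Gaussian decay: with $r:=\max_i\norm{\centc_i}$, for $\norm{x}>r$ one has $\norm{x-\centc_i}\ge\norm{x}-r$, hence
\[
  \norm{\kernMap(x)}^2\ =\ \sum_{i=1}^{\centerNum}\exp\!\left(-\frac{2\norm{x-\centc_i}^2}{\sigma^2}\right)\ \le\ \centerNum\,\exp\!\left(-\frac{2(\norm{x}-r)^2}{\sigma^2}\right)\ \longrightarrow\ 0
\]
as $\norm{x}\to\infty$. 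Moreover $\bar\kernMap$ is injective: it is $1$--$1$ on $\domain$ by Theorem~\ref{th:ker-map} (whose sphere-intersection argument applies verbatim with $\subD$ replaced by $\domain$), and $\vec 0\notin\kernMap(\domain)$ because each Gaussian coordinate is strictly positive at every finite $x$. Since $\domain\cup\{\infty\}\cong S^{n_d}$ is compact and $\HspaceN{\centerNum}$ is Hausdorff, the continuous injection $\bar\kernMap$ is a homeomorphism onto its image $\kernMap(\domain)\cup\{\vec 0\}$, which is therefore a compact surface homeomorphic to $S^{n_d}$.

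The step I expect to be the crux is controlling the added point $\vec 0$: one must rule out that gluing in this single limit point produces a pathology (infinitely many sheets accumulating, or loss of the locally-Euclidean property). The compactification packaging handles exactly this at the topological level --- identifying the augmented set with $S^{n_d}$ is what certifies $\vec 0$ as an ordinary surface point --- so the real work is verifying the two hypotheses of the ``continuous bijection from compact to Hausdorff is a homeomorphism'' lemma, namely the uniform decay estimate and injectivity across $\infty$. If a $C^1$ (rather than merely topological) surface is wanted at $\vec 0$, an additional local computation in a chart around $\infty$ would be required; I would flag this but not carry it out, reading ``surface'' in the topological sense consistent with the corollaries above.
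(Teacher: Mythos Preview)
The paper states this corollary without an explicit proof, relying instead on the preceding paragraph's observations (each center maximizes its own coordinate, so a compact domain stays bounded away from the origin; points receding to infinity map toward $\vec 0$). Your argument is correct and simply makes those observations rigorous: the continuity-plus-compactness step and the positive lower bound $\rho$ are immediate, and your one-point-compactification packaging for the second assertion is the standard way to turn ``points at infinity go to $\vec 0$'' into a genuine compact-surface statement, so there is no real divergence from the paper's (implicit) reasoning.
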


With the surface geometry of the kernel embedding $\kernMap$ for a
given KEI set $\centerSet$ established, the next step
is to consider partitions of the surface in $\HspaceN{\centerNum}$.
Given that the
unbounded input space maps to a compact surface, defining a partition of
the space is equivalent to defining a cutting surface transverse to the
compact surface that divides space into positive and negative regions.  
The intersection of the compact surface with the cutting surface in
the Hilbert space maps back to separating boundaries in the original space.
The objective will be to define cutting surfaces by specific level-sets
of an implicit function in the Hilbert space.  In short, safe region
generation in the input space involves construction of an implicit
function in the Hilbert space based on sampled points with 
{\tsafe}\,/\,{\tunsafe} labels.

\begin{figure}[t]
  \newlength{\xsh}
  \newlength{\ysh}
  \newlength{\bht}
  \newlength{\bhb}
  \newlength{\ylabs}
  \setlength{\bht}{1.17in}
  \setlength{\bhb}{1.00in}
  \setlength{\xsh}{10pt}
  \setlength{\ylabs}{-6pt}
  \vspace*{5pt}
  \begin{tikzpicture}[inner sep=0pt, outer sep=0pt]
    % Top row consisting of inout domain plots.
    \node (fig_a1) at (0in,0in)
      {\includegraphics[height=\bht]{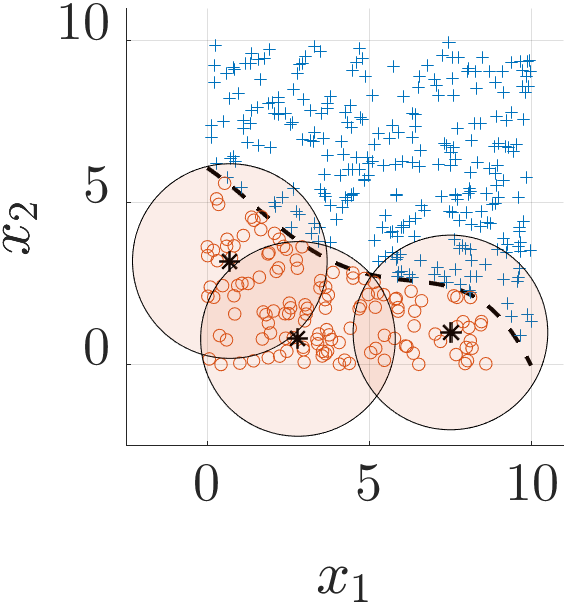}};
    \node[anchor=west] (fig_b1) at ([xshift=\xsh]fig_a1.east)
      {\includegraphics[height=\bht,clip=true,trim=1.7in 0in 0in 0in]{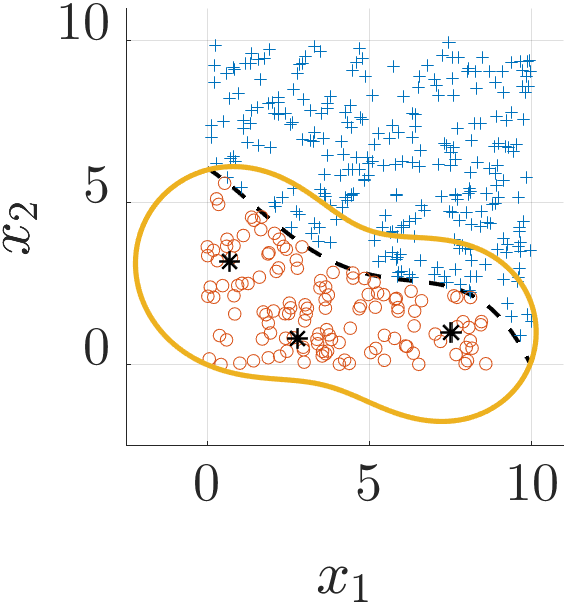}};
    \node[anchor=west] (fig_c1) at ([xshift=\xsh]fig_b1.east)
      {\includegraphics[height=\bht,clip=true,trim=1.7in 0in 0in 0in]{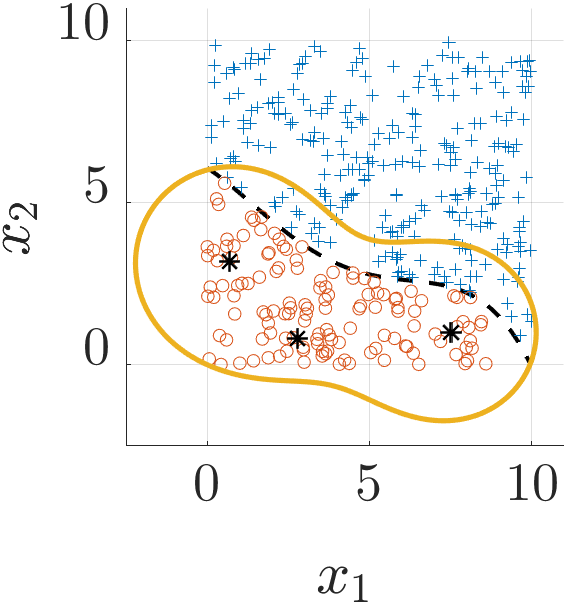}};
      
    \node[anchor=north west] (fig_a2) at ([yshift=\ysh]fig_a1.south west)
      {\includegraphics[height=\bhb,clip=true,trim=0.00in 0in 0in 0in]{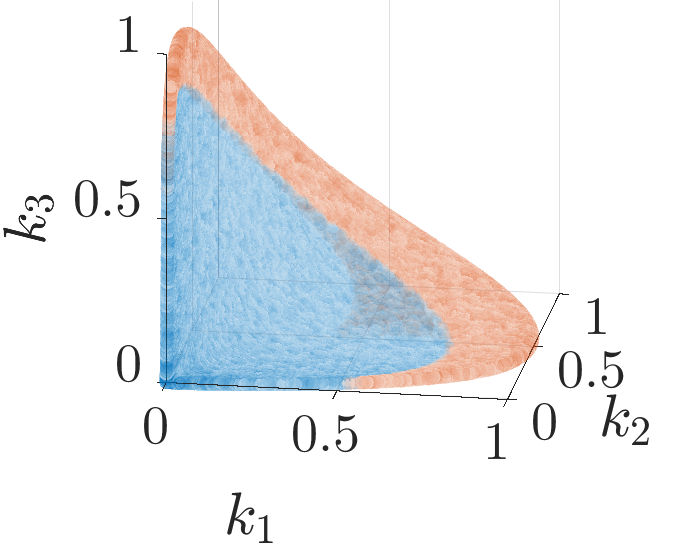}};
    \node[anchor=north west,xshift=-5pt] (fig_b2) at (fig_b1.south west)
      {\includegraphics[height=\bhb,clip=true,trim=0.9in 0in 0in 0in]{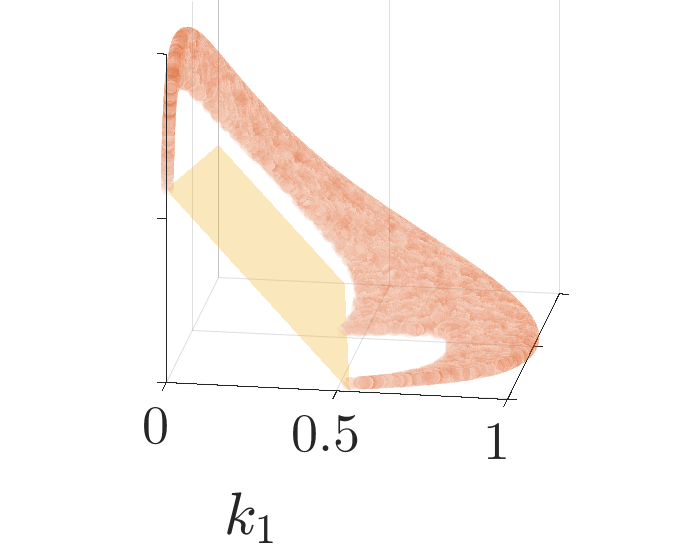}};
    \node[anchor=north west,xshift=-4pt] (fig_c2) at (fig_c1.south west)
      {\includegraphics[height=\bhb,clip=true,trim=0.9in 0in 0in 0in]{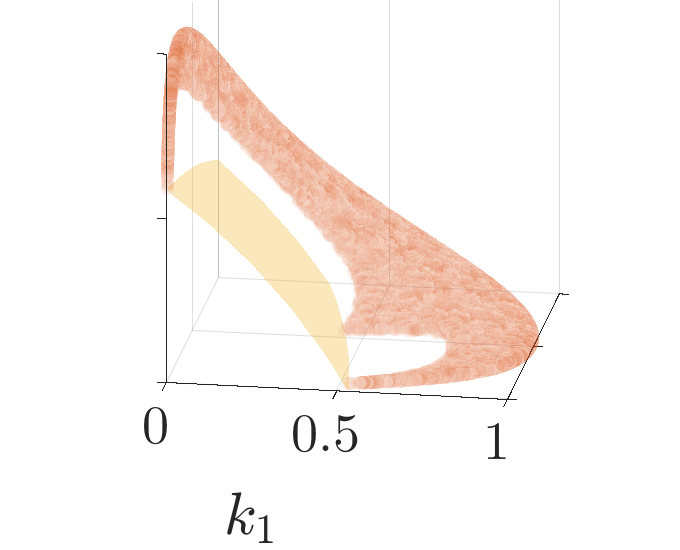}};

    %\node[anchor=north,xshift=-3pt,yshift=\ylabs] at (fig_a2.south)
      %{(a)};
    %\node[anchor=north,xshift=-3pt,yshift=\ylabs] (caption_b) at (fig_b2.south)
      %{(b)};
    %\node[anchor=north,xshift=-3pt,yshift=\ylabs] at (fig_c2.south)
      %{(c)};

    \node[anchor=north west,yshift=-3pt] (legend) at (fig_a2.south west)
      {\includegraphics[width=\linewidth]{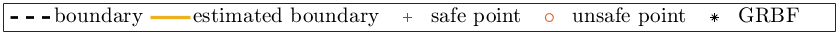}};
    
  \end{tikzpicture}
 
 \caption{Depiction of the input domain and the associated Hilbert space along with the GRBF centers and the separating boundaries and surfaces.
 \figlabel{example}}
\end{figure}

The observation leading to Corollary \corref{compactSurface} hints that
the geometry induced by the KEI set plays a role in establishing the
surface image in the Hilbert space.  Here, we connect the KEI set to
coverings of the input domain.  For the sets and spaces defined here,
define an $\epsilon$-covering as follows:

\begin{definition} For $\domain$ with the Euclidean norm (2-norm), let
$\subD \subset \domain$. For $\epsilon > 0$, the set 
$\centerSet \subset \subD$ is an $\epsilon$-cover for $\subD$ if for
every $x \in \subD$ there exists a $c_i \in \centerSet$ such that
$\norm{x - c_i} \le \epsilon$. Equivalently,
$\subD \subset \union_{i} \mathcal{B}_\epsilon(c_i)$.
\end{definition}

In the context of a Gaussian radial basis neural network or Gaussian
kernel machines, $\epsilon$-covers partition the original space in a
predictable manner in the kernel mapped Hilbert space.  An equivalent 
$\epsilon$-cover specification is the following $\infty$-norm inequality
applied to the kernel mapping output:
\begin{equation} \eqlabel{epsCovKM}
  \coverSet = 
  \setof{x \in \Real^{n_d}}{\norm{\kernMap(x)}_\infty 
                                              \ge \RBF(\epsilon; \sigma) = e^{-\epsilon^2/\sigma^2}}.
\end{equation}
If the KEI set is defined based on an $\epsilon$-cover of $\subD \subset \domain$, then the kernel map will map points
in $\subD$ to points in $\HspaceN{\centerNum}$ on a surface some minimal
distance from the origin.  Points in $\domain$ receding from $\subD$
will tend towards the origin in $\HspaceN{\centerNum}$. 

The top row in Figure \figref{example}
depicts a 2D example consisting of a collection of safe (blue +) and unsafe
(orange $\circ$) points generated from a nonlinear separating boundary
(dashed black curve), as well as three GRBFs (black *) centered in the unsafe region, $\sampSet^u \subset \subD$, that are an $\epsilon$-cover.
These three centers create a mapping of the 2D domain into a 3D
Hilbert space, depicted in the bottom row where 
the orange surface is the unsafe region surface $\kernMap(\setUnsafe)$ and 
the blue surface is the safe region surface $\kernMap(\setSafe)$. 
%where  $\sampSet^s\subset\subD$ is the safe set.  
The cutting surface sought is one that separates points ``near'' to the
origin from those in $\kernMap(\setUnsafe)$.  Two such surfaces are
depicted in the second and third columns (bottom row) along with their level-set boundaries in the input space (top row).  The next section describes how to derive cutting surfaces from solutions to constrained optimization problems.
%optimization problems with an emphasis on showing the existence of solutions
%by providing at least one feasible solution.

%\textcolor{magenta}{Initially simple instances will be described, but
%the aim is to arrive at a more descriptive but nearly as simple
%parametric model for separating $\HepsD$ from its complement.}
%\textcolor{red}{Do we need to define covering now? Does 
%theorem hold for and set of points and their triangulation?}

\section{Cutting Surfaces and Partitions}

This section exploits the geometry of the kernel mapping to define
cutting hyperplanes in its output Hilbert space. These hyperplanes translate 
to partitions of the input space. Initially, the cutting hyperplanes will
be in the original Hilbert space and are defined by a single-layer kernel
machine network. Next, a second layer is added to the kernel machine network to create cutting hyperplanes that translate to elliptical volumes in the original Hilbert space. More complex cutting surfaces are then explored
based on the inclusion of positive (\textit{unsafe}) and negative
(\textit{safe}) samples. 

\subsection{Single-Layer Kernel Machine Network Partitions}

Constructing an implicit partition function with barrier function
properties with a single-layer kernel machine involve identifying a
cutting hyperplane for the collection of safe and unsafe data. 
Consider a kernel embedding built using the {\tunsafe} data where
$\centerSet$ is a KEI set and covering of the unsafe set for some $\epsilon$.  
One cutting hyperplane results from the following linear program:
\begin{equation}  \eqlabel{linProgSL}
\begin{aligned}
    \min_{\alpha\in\Real^{\centerNum}} \quad &\vec 1\,^T \alpha\\
    \text{s.t.} \quad & \alpha^T \vec k(x_i) \geq 1, \; \forall i \in\setof{l}{x_l\in\sampSet^u}\\
                \quad & \alpha_j \geq 0, \; \forall j=1,\hdots,n_c
\end{aligned}
\end{equation}
where $\vec 1=[1\, \cdots\, 1]^T\in\Real^{\centerNum}$, and
$\alpha\in\Real^{\centerNum}$ are the coefficients of the separating hyperplane.
The one level-set of $\slkmFunc(x) =\alpha^T\vec k(x)$ defines the
hyperplane in the positive hyperoctant that is furthest from the origin
and places all unsafe points in $\sampSet$ in the positive half-plane.
%\textcolor{blue}{
%PAV: Not sure. Can we really show?  How more important than feasibility?
%-- A unique solution to linear program exists \cite{JeBa[2002]OpRes}(ref?).}
The linear program is guaranteed to have a solution:

\begin{thm}
Given a kernel embedding $\kernMap$ defined from an $\epsilon$-covering
of $\sampSetUnsafe$, there is a hyperplanar splitting of
$\HspaceN{\centerNum}$
described by
\begin{equation}
  \safeS = \setof{x \in \subD}{ \innprod{\vec b}{\kernVec{}(x)} < 1 } 
\end{equation}
where $\safeS \subset \setSafe \subset \subD$ and $\sampSetUnsafe \subset
\bar \safeS$.
\end{thm}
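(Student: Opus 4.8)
The plan is to realize $\safeS$ as the strict sub-level set $\safeS = \setof{x \in \subD}{\innprod{\vec b}{\kernMap(x)} < 1}$ of the single-layer kernel machine $\slkmFunc(x) = \innprod{\vec b}{\kernMap(x)}$, where $\vec b$ is an optimal solution of the linear program defining the cutting hyperplane, and then to check the two asserted inclusions. The first step is to establish that the program admits an optimizer at all. Feasibility is a direct consequence of the $\epsilon$-covering hypothesis on $\centerSet$: for each $x_i \in \sampSetUnsafe$ there is a center $c_j \in \centerSet$ with $\norm{x_i - c_j} \le \epsilon$, so the $j$th component of $\kernMap(x_i)$ is at least $e^{-\epsilon^2/\sigma^2} > 0$; hence $\alpha = e^{\epsilon^2/\sigma^2}\,\vec 1$ satisfies every constraint $\alpha^T \kernMap(x_i) \ge 1$ as well as $\alpha \ge \vec 0$. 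The cost $\vec 1^{\,T}\alpha$ is bounded below by $0$ over the feasible polyhedron because of the nonnegativity constraints, so a feasible, bounded linear program attains its minimum at a vertex; fix such a minimizer $\vec b$.

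With $\vec b$ in hand, the inclusion $\sampSetUnsafe \subset \bar\safeS$ is immediate: each unsafe sample obeys $\innprod{\vec b}{\kernMap(x_i)} \ge 1$ by construction, hence none lies in the open set $\safeS$, i.e.\ each lies in its complement $\bar\safeS$. To see the splitting is genuine, $\safeS$ is a proper, nonempty subset of $\subD$: it omits all of $\sampSetUnsafe$, while $\slkmFunc$ falls below $1$ at points whose kernel image is near $\vec 0 \in \HspaceN{\centerNum}$ — which exist by \corref{compactSurface}, since $\vec b \ge \vec 0$ and the Gaussian kernel decays to zero away from every center.

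It remains to establish $\safeS \subset \setSafe$. Since $\setSafe$ and $\setUnsafe$ partition $\subD$, this is equivalent to $\setUnsafe \subset \bar\safeS$, i.e.\ to showing that $\innprod{\vec b}{\kernMap(x)} \ge 1$ holds not merely at the finitely many unsafe samples but throughout the true unsafe region. The argument I would pursue couples two observations: first, the $\epsilon$-cover forces $\kernMap(\setUnsafe)$ onto the portion of the compact kernel surface bounded away from the origin — every $x \in \setUnsafe$ has a center within $\epsilon$, so $\norm{\kernMap(x)}_\infty \ge e^{-\epsilon^2/\sigma^2}$, placing $x$ in the $\infty$-norm cover set $\coverSet$; second, minimizing $\vec 1^{\,T}\alpha$ drives $\alpha$ to be as small as the unsafe constraints permit, which (as remarked after the program) yields the furthest-from-origin separating hyperplane in the positive hyperoctant, so its one-level set lies beyond this annular band. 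Combining these with continuity of $\kernMap$ and the fact that each point of $\setUnsafe$ is $\epsilon$-close to a sample — whose image is already on or past the hyperplane — should propagate the separation from the sample set to the whole continuum. I expect this final step to be the main obstacle, and the place where the hypotheses must be read somewhat charitably: passing from the finite constraint set to all of $\setUnsafe$ requires the covering radius $\epsilon$ (equivalently, the sample density of $\sampSetUnsafe$ in $\setUnsafe$) to be small relative to both the geometry of $\setUnsafe$ and the level-set geometry of $\slkmFunc$, and turning ``close to a point on the hyperplane'' into ``on the correct side of the hyperplane'' is the quantitative heart of the proof. By contrast, feasibility, boundedness, existence of the optimizer, and the sample-level inclusion $\sampSetUnsafe \subset \bar\safeS$ are routine linear-programming and Gaussian-decay bookkeeping.
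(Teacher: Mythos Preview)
Your route to the hyperplane differs from the paper's. Rather than invoking LP existence theory, the paper constructs $\vec b$ explicitly: assign each unsafe sample to its dominant center via $x \mapsto \argmax_i k(x,c_i)$, take the cluster-wise minimum $y^i = \min_{x \in \sampSetUnsafe_i} k(x,c_i)$, and let these minima be the coordinate intercepts of the cutting plane (so effectively $b_i = 1/y^i$). Any $x$ assigned to cluster $i$ then has $b_i\, k(x,c_i) \ge 1$ by minimality, hence $\vec b^{\,T}\kernMap(x) \ge 1$ since the remaining terms are nonnegative. This is more elementary and fully constructive, bypassing feasibility/boundedness/vertex-attainment; your version, in exchange, ties $\vec b$ to the program \eqref{linProgSL} and therefore connects the theorem directly to the optimization machinery used downstream. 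Note also that the paper's clustered $\vec b$ is itself a feasible point for \eqref{linProgSL}, so its construction furnishes an alternative witness to the feasibility step you prove with $\alpha = e^{\epsilon^2/\sigma^2}\vec 1$.

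On the inclusion $\safeS \subset \setSafe$, you have correctly isolated the delicate step: lifting the constraint from the finite sample set $\sampSetUnsafe$ to the continuum $\setUnsafe$ is not delivered by either the LP constraints or the clustering construction without an additional density assumption linking the samples to the true unsafe region. The paper's proof is equally informal here --- it simply asserts that only safe points lie on the negative side (the half-space containing the origin) without quantifying the passage from $\sampSetUnsafe$ to $\setUnsafe$. So your caveat is well placed and matches the paper's level of rigor; the claim should be read at the sample level, with the extension to $\setSafe$ contingent on precisely the sampling-resolution hypotheses you flag.
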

\begin{proof}
Consider the covering set of centers $\centerSet$. 
Generate a clustering of the unsafe data points $\sampSetUnsafe$ based 
on the cluster assignment function $x^u \in \sampSetUnsafe
\mapsto \argmax_i k(x^u, c_i)$ for $c_i \in \centerSet$. For each
cluster set $\sampSetUnsafe_i$ find the minimum value $y^i=k(x,c_i)$ for $x \in \sampSetUnsafe_i$. Each cluster
minimum defines the $i$-th coordinate intercept for a cutting plane in
the Hilbert space, which defines $\vec b$. All \textit{unsafe} points
$\sampSetUnsafe$ lie on the non-negative side, possibly with some
\textit{safe} points from $\sampSetSafe$. Only \textit{safe} points in
$\sampSetSafe$ lie on the negative side (it is half-plane containing origin).
%The simplest proof is simply to find one such hyperplanar splitting.  Given
%the hyperspherical splitting, such a hyperplane can be defined using the
%coordinate axis intercepts of the hypersphere. 
%We know that it is non-trivial since we have defined at least once such
%hyperplane.
%Not exactly sure yet. Can use least squares with hard constraint.
%Easiest was is to do least squares with soft constraint. Get the
%separating plane.  Evaluate on the data and if there is a point that
%maps less than zero, shift the plane by that amount. This constructs are
%least one such plane, but it need not be the optimal plane. Can we show
%that the hypersphere tangent is guaranteed to provide that plane?
%
%OK, take the hyperplane tangent to the hypersphere. Define a coordinate
%system in the plane plus the normal to the plane. Any movement in the
%plane must increase the value of $b \vec k + c$ away from 0 and cannot
%go below $0$. To show that movement to any other point will increase,
%simply move towards a coordinate axis in the plane. Find the equivalent
%on the surface. 
%\textcolor{red}{NOT SURE IF BETTER}
\end{proof}

% \textcolor{red}{ADD REFERENCE TO FIGURE \figref{example}(b) HERE.
% DESCRIBE WHAT VISUALIZES. IDEALLY HAS THE FEASIBLE PLANE FROM ABOVE
% THEOREM PLUS FROM THE OPTIMIZATION. INCLUDE SEPARATING BOUNDARIES.}
\begin{cor}
By virtue of defining a splitting with a kernel embedding, this same
operation generates a splitting of the original domain $\subD$, and
likewise the full space $\domain$.
\end{cor}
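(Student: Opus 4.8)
The plan is to lean on the fact that the kernel embedding $\function{\kernMap}{\domain}{\HspaceN{\centerNum}}$ is a globally defined, single-valued, injective map: it is defined at \emph{every} point of $\domain$, not merely on $\subD$ or on the covering set, and by Theorem~\ref{th:ker-map} together with the definition of a kernel embedding it is $1$-$1$ and continuous (indeed differentiable). The splitting of $\HspaceN{\centerNum}$ furnished by the preceding theorem is realized by the affine functional $h\mapsto\innprod{\vec b}{h}-1$, which decomposes $\HspaceN{\centerNum}$ into three pairwise disjoint pieces: the open half-space $\setof{h}{\innprod{\vec b}{h}<1}$ containing the origin, the cutting hyperplane $\setof{h}{\innprod{\vec b}{h}=1}$, and the opposite open half-space $\setof{h}{\innprod{\vec b}{h}>1}$.

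First I would form the pullback scalar field $g(x)=\innprod{\vec b}{\kernMap(x)}$. Because $\kernMap$ is defined on all of $\domain$, $g$ is a well-defined real-valued function on $\domain$, so $\domain$ is partitioned as the disjoint union of $g^{-1}\big((-\infty,1)\big)$, $g^{-1}(\{1\})$, and $g^{-1}\big((1,\infty)\big)$. Intersecting each piece with $\subD$ yields a partition of $\subD$ whose ``safe'' piece is exactly $\safeS=\setof{x\in\subD}{g(x)<1}$ from the theorem; injectivity of $\kernMap$ is what guarantees this is a genuine \emph{partition} rather than merely a pre-image family, since it prevents a point of $\subD$ (or of $\domain$) from being assigned two labels through a shared image. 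Continuity of $\kernMap$ then promotes the set-theoretic partition to a topological one: $g^{-1}\big((-\infty,1)\big)$ and $g^{-1}\big((1,\infty)\big)$ are open in $\domain$, their traces on $\subD$ are relatively open, and every separating boundary between safe and unsafe regions lies in the level set $g^{-1}(\{1\})=\kernMap^{-1}\!\big(\setof{h}{\innprod{\vec b}{h}=1}\big)$, i.e.\ it is the pre-image of the cutting hyperplane. This instantiates, for the single-layer construction, the earlier claim that the intersection of the image surface with the cutting surface maps back to separating boundaries in the input space.

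To obtain the splitting of the \emph{full} space $\domain$ and not just of $\subD$, I would simply note that nothing in the definition of $g$ or of its sub-/super-level sets referred to $\subD$: the theorem's inclusions $\safeS\subset\setSafe$ and $\sampSetUnsafe\subset\bar\safeS$ are statements localized to $\subD$, but $g$, its level set, and the two open regions are all defined on $\domain$, so the same cutting hyperplane splits $\domain$ verbatim. I would close with the consistency check that this global extension agrees with the geometry established earlier: as $\norm{x}\to\infty$ the Gaussian kernel coordinates vanish, so $\kernMap(x)\to\vec 0$ and $g(x)\to 0<1$, placing a neighborhood of infinity in $\domain$ on the safe side, exactly as Corollary~\corref{compactSurface} predicts.

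The main obstacle here is not any calculation but identifying the precise hypothesis that does the work: what turns ``taking pre-images of disjoint sets'' into ``generating a splitting'' is that $\kernMap$ is a kernel embedding, so the appeal to its injectivity and continuity must be made explicit; once those are in hand the remainder is routine bookkeeping with pre-images.
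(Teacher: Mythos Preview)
The paper states this corollary without proof, treating it as immediate from the kernel-embedding property, and your pullback construction via $g(x)=\innprod{\vec b}{\kernMap(x)}$ is the natural way to spell it out. The structure of your argument is correct, but you misidentify which hypothesis does the work at the step you yourself flag as the crux. You claim that injectivity of $\kernMap$ ``is what guarantees this is a genuine partition \ldots\ since it prevents a point of $\subD$ (or of $\domain$) from being assigned two labels through a shared image.'' This is backwards. For \emph{any} map $f\colon X\to Y$ and any partition $\{A_i\}$ of $Y$, the pre-images $\{f^{-1}(A_i)\}$ automatically partition $X$: each $x$ has a single image $f(x)$, which lies in exactly one $A_i$. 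A point receiving two labels would require $f$ to be multi-valued, not non-injective; injectivity concerns two \emph{different} points sharing an image, which is irrelevant to whether the pre-images are disjoint and cover.

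What the kernel-embedding hypothesis actually contributes is geometric, not set-theoretic: injectivity together with differentiability makes $\kernMap$ an embedding of $\domain$ onto a smooth $n_d$-surface in $\HspaceN{\centerNum}$, so the cutting hyperplane intersects that surface transversally and the pulled-back boundary $g^{-1}(\{1\})$ is a genuine codimension-one hypersurface in $\domain$ rather than something degenerate. It also guarantees that the splitting in $\HspaceN{\centerNum}$ is faithfully reflected in $\domain$---distinct pieces of the image surface correspond to distinct regions of the input space---which is the content the paper is gesturing at with the phrase ``by virtue of \ldots\ a kernel embedding.'' Your remaining points (continuity giving open sub- and super-level sets, the extension from $\subD$ to all of $\domain$, and the check that points at infinity land on the safe side via $\kernMap(x)\to\vec 0$) are all fine.
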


The second column of Figure \figref{example} depicts the separating
hyperplane, generated by the linear program in \eqref{linProgSL}. The bottom plot shows the hyperplane (yellow surface) in the Hilbert space, and the top plot shows the separating boundary (yellow closed contour) in the input domain.
%Note that some safe data points are misclassified as unsafe. 

\subsection{Two-Layer Neural Network Partitions}

The Hilbert space splitting based on a separating hyperplane may not
capture the true classification boundary in the Hilbert space. Consequently, a richer decision boundary based on a nonlinear separating surface
should create a splitting with fewer misclassified points.
Consider the problem of defining a quadratic safety volume in the Hilber
space specified by a quadratic boundary
\begin{equation} \eqlabel{quadSurfEqn}
  \zeta^T A_2 \zeta + b_2^T \zeta + c_2 = 0,
\end{equation}
where elements below the surface (evaluate to less than 0) are
guaranteed to be {\em safe}, and {\em unsafe} elements are guaranteed to
be above the surface (evaluate to greater than 0). 
%The nature of the description regarding {\em safe} versus {\em unsafe}
%elements permits some of the {\em safe} ones to lie above the surface. 
Now, the optimization problem becomes a quadratically constrained linear program (QCLP), which is an NP-hard problem in theory \cite{Je[2005]qcqp,kr[2020]qcqp}. Although relaxation techniques for solving QCLPs exist, such optimization problems should be avoided when possible.
% \textcolor{red}{Class of problems with quadratic program with quadratic
% constraints is NP-hard.  What about linear program + quadratic
% constraints? Optimization more difficult to solve.}
Adding a quadratic polynomial kernel as a second layer to the network will preserve the LP formulation in \eqref{linProgSL} for the cutting surface optimization problem.
% these problems.  Doing so will have computational advantages.
% The cutting surface optimization problem remains a linear program of the
% form of \eqref{linProgSL}. Solutions are fast to obtain and guaranteed,
% by construction of the layer, to have a non-empty solution space.

%through the structure of the current problem may support a
%solution. For example, we can restrict $A$ to being positive
%semi-definite. However, 

\subsubsection{A Quadratic Polynomial Layer}
Using a quadratic polynomial kernel of the form 
$p_2(x,y) = (x^T y + \lambda)^2$ leads to quadratic cutting surfaces
in $\HspaceN{n_q}$ where $n_q$ is the number of kernels. 
The mapping is now ${\vec{p}}_2 \circ \vec k : \subD \rightarrow \HspaceN{n_q}$, where 
\begin{equation}\eqlabel{poly2KerVec}
    \vec{p}_2(x) = \left[p_2(x,y_1) \; \cdots \; p_2\left(x,y_{n_q}\right)\right]^T.
\end{equation}
% The trade-off over \eqref{quadSurfEqn} is that the quadratic polynomial
% layer defines a linear program whose solution space is a (typically
% strict) subspace of all possible quadratic surface solutions from
% \eqref{quadSurfEqn}. 

For the quadratic polynomial layer, the parametric degrees of freedom
are the vectors $y_i$ such that each kernel coordinate is
$p_2(x,y_i; \lambda)$. For simplicity, let 
$n_q \ge \centerNum$ such that $y_i = e_i$ for $i \in \set{1, \dots, \centerNum}$
where $e_i$ is the $i^{\text{th}}$ unit coordinate vector. The remaining
vectors $y_i$, if any, for $\centerNum < i \le n_q$ can be any basis elements that support the task at hand (typically selected based on the training data). 
Creating a separating boundary is equivalent to establishing the coefficients $\alpha_i$ for the following constraint equation,
\begin{equation} \eqlabel{quadSurfTL}
  z^T \of{ \sum_i \alpha_i y_i y_i^T} z 
    + 2 \lambda \of{ \sum_i \alpha_i y_i^T} z 
    + \lambda^2 \of{ \sum \alpha_i} = 0,
\end{equation}
where $z \in \HspaceN{\centerNum}$. The structure of \eqref{quadSurfTL} matches that
of \eqref{quadSurfEqn}. Note that due to the limited number of kernels, and thus limited number of basis vectors $y_i$, \eqref{quadSurfTL} can only represent a subset of solutions generated by \eqref{quadSurfEqn}. However, the trade-off is that the optimization problem for solving the coefficients of kernel machines remains a linear program:
\begin{equation}  \eqlabel{linProgTL}
\begin{aligned}
    \min_{\alpha\in\Real^{n_q}} \quad &\vec 1\,^T \alpha\\
    \text{s.t.} \quad & \alpha^T \vec{p}_2 \circ \vec k(x_i) \geq 1, \; \forall i \in\setof{l}{x_l\in\sampSet^u}\\
                \quad & \alpha_j \geq 0, \; \forall j=1,\hdots,n_q
\end{aligned}
\end{equation}

\begin{thm} \thmlabel{existQL}
Given a kernel embedding $\kernMap$ defined from an $\epsilon$-covering
of $\sampSetUnsafe$, there is a hyperspherical splitting of
$\HspaceN{\centerNum}$ described by
\begin{equation}
  \safeS = \setof{x \in \subD}{\norm{\kernVec{}(x) + \lambda \vec 1}_2
    < \rho_P\ \text{for}\ x \in \sampSetUnsafe}
\end{equation}
where $\safeS \subset \setSafe \subset \subD$, 
$\lambda \ge 0$,
$\rho_P = \min\left( \norm{\kernVec{}(x) + \lambda \vec 1}_2 \right)$ for 
$x \in \sampSetUnsafe$, 
and $\sampSetUnsafe \in \bar \safeS$.
\end{thm}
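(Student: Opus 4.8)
The plan is to reduce the theorem to a short direct construction by using the canonical basis choice $y_i = e_i$ for the quadratic polynomial layer, and then arguing feasibility exactly as in the hyperplanar case. With $y_i = e_i$ the composite map $\vec{p}_2 \circ \kernMap$ sends $x$ to the vector whose $i$-th entry is $(k(x,c_i)+\lambda)^2$, so evaluating the linear constraint of \eqref{linProgTL} on a \emph{uniform} coefficient vector $\alpha = \beta\,\vec 1$ with $\beta>0$ gives $\beta\sum_i (k(x,c_i)+\lambda)^2 = \beta\,\norm{\kernMap(x)+\lambda\vec 1}_2^2 \ge 1$, i.e. $\norm{\kernMap(x)+\lambda\vec 1}_2 \ge 1/\sqrt{\beta}$. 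Hence the first step is to observe that, for any fixed $\lambda \ge 0$, choosing $\beta = 1/\rho_P^2$ turns the quadric \eqref{quadSurfTL} into the sphere of radius $\rho_P$ centered at $-\lambda\vec 1$ in $\HspaceN{\centerNum}$; so a uniform $\alpha$ produces a genuinely \emph{hyperspherical} cutting surface rather than a generic quadric, and its induced partition of $\subD$ is precisely the set $\safeS$ in the statement.

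Second, I would establish feasibility, which gives existence of such an $\alpha$. Set $\rho_P := \min_{x \in \sampSetUnsafe}\norm{\kernMap(x)+\lambda\vec 1}_2$; since $\sampSetUnsafe$ is finite and nonempty and the Gaussian kernel is strictly positive with $\lambda \ge 0$, this minimum is attained and $\rho_P > \sqrt{\centerNum}\,\lambda \ge 0$ (using $\kernMap(x)\ge \vec 0$ componentwise). Then $\alpha = \rho_P^{-2}\vec 1$ is nonnegative and, by the definition of $\rho_P$, satisfies $\alpha^T \vec{p}_2 \circ \kernMap(x_i) = \rho_P^{-2}\norm{\kernMap(x_i)+\lambda\vec 1}_2^2 \ge 1$ for every unsafe sample $x_i$, so the LP \eqref{linProgTL} is feasible and this choice already realizes the claimed splitting. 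For this $\alpha$ the non-negative side is $\{x : \norm{\kernMap(x)+\lambda\vec 1}_2 \ge \rho_P\}$, which contains $\sampSetUnsafe$ by construction, giving $\sampSetUnsafe \subset \bar\safeS$, while its complement is the preimage under $\vec{p}_2 \circ \kernMap$ of the open ball of radius $\rho_P$ about $-\lambda\vec 1$, which is exactly $\safeS$. It is worth noting that this ball contains the origin (since $\sqrt{\centerNum}\,\lambda < \rho_P$), consistent with the ``separate the unsafe surface from points near the origin'' picture of Section II: inputs receding from $\subD$ map toward $\vec 0$ and hence land in $\safeS$.

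The remaining claim, $\safeS \subset \setSafe \subset \subD$, is where I expect the real work, and it parallels the hyperplanar theorem. The tools are Corollary \corref{compactSurface} and the $\epsilon$-cover characterization \eqref{epsCovKM}: because $\centerSet$ $\epsilon$-covers the unsafe set, every unsafe point $x$ has a center within distance $\epsilon$, so $\norm{\kernMap(x)}_\infty \ge e^{-\epsilon^2/\sigma^2}$, and since $\kernMap(x)$ and $\lambda\vec 1$ are componentwise nonnegative, $\norm{\kernMap(x)+\lambda\vec 1}_2 \ge \norm{\kernMap(x)}_\infty \ge e^{-\epsilon^2/\sigma^2}$. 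The delicate point is to upgrade this uniform lower bound to $\norm{\kernMap(x)+\lambda\vec 1}_2 \ge \rho_P$ for \emph{all} of $\setUnsafe$, not merely the sampled points that define $\rho_P$: one must rule out a true unsafe point mapping strictly between $-\lambda\vec 1$ and the sampled unsafe surface. This needs either the (natural, in the perception setting) hypothesis that $\sampSetUnsafe$ is itself an $\epsilon$-cover of $\setUnsafe$, or an explicit margin condition tying $\epsilon$ and $\sigma$ to the separation between $\setSafe$ and $\setUnsafe$; granting it, $\kernMap(\setUnsafe)$ lies outside the open ball, so $\setUnsafe \cap \safeS = \emptyset$ and therefore $\safeS \subset \setSafe$. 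Everything else — the sphere identification, feasibility, and exclusion of the unsafe samples — is routine and essentially copies the preceding proof.
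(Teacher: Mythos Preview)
Your approach is essentially the paper's: set $\alpha_i = \rho_P^{-2}$ on the first $\centerNum$ coordinates (and zero on any remaining $y_i$ with $i>\centerNum$), observe that this makes the quadric \eqref{quadSurfTL} the sphere $\norm{\kernMap(x)+\lambda\vec 1}_2=\rho_P$, and conclude feasibility of \eqref{linProgTL}. Your additional discussion of the containment $\safeS\subset\setSafe$ goes beyond the paper's own proof, which is terse and does not argue that point in detail; your observation that it implicitly requires a sampling/margin hypothesis on $\sampSetUnsafe$ relative to $\setUnsafe$ is well taken.
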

\begin{proof}
The theorem asserts the existence of a feasible quadratic surface with
hard \textit{unsafe} constraints for splitting the Hilbert space using
the 2-norm.  The 2-norm operation is equivalent to setting $\lambda =
0$, $\alpha_i = \rho_P^{-2}$ for $i \le \centerNum$,
and $\alpha_i = 0$ for $\centerNum < i \le n_q$ (should such $i$ exist)
for the polynomial quadratic kernel problem specified by
\eqref{linProgTL}.  Thus, the set of feasible solutions to
\eqref{linProgTL} for $\lambda = 0$ has at least one element in it.
For $\lambda > 0$, $\rho_P = \min\left( \norm{\kernVec{}(x) + \lambda \vec 1}_2 \right)$ 
and gives a hyper-sphere centered at $-\lambda \vec 1$.  
\end{proof}

The theorem establishes the existence of a solution to the linear program
defined in \eqref{linProgTL}, thereby showing that the solution space is non-empty
when $\lambda \ge 0$.  The case of $\lambda < 0$ is possible but trickier to solve
for, and may lead to poor solutions when $-1 < \lambda < 0$. 
%The second layer consists of nodes that square each coordinate
%individually. The separating boundary defined by $\rho_P^2$ on the
%sum of the second layer's coordinate outputs defines a splitting
%hyperplane in the 2-layer network output.
Referring again to the example in Figure \figref{example}, the
third column depicts the separating quadratic surface, solved by the linear
program in \eqref{linProgTL}. The bottom plot shows the quadratic surface (yellow surface) in the Hilbert space, and the top plot shows the separating boundary (yellow closed contour) in the input domain. There are slightly fewer misclassified safe data points compared to the linear hyperplane results in the second column.

% \textcolor{red}{TO ADD FIGURE AND REVISE TEXT AS FIT:
% The surfaces and separating boundaries in Figure \figref{example}(c)
% arise from the optimization problem \eqref{linProgTL} and from the
% constructed solution in Theorem \thmref{existQL}.}

Higher order polynomial kernels may be used to partition the space.
Doing so should provide more space carving degrees of freedom but will
require a policy for selecting the kernel basis elements (beyond the
unit coordinate vectors of the Hilbert space). 
Solving for the second layer as a kernel SVM using a sequential minimal optimization (SMO) algorithm
\cite{Platt[1998]smo} will provide a solution that identifies the basis
elements and solves for their coefficients, much like in \cite{MoVe[2020]bf}.
This section focused on a constructive approach to guarantee a non-empty solution space; it can be augmented with an SMO-like basis expansion solver to obtain $n_q >
\centerNum$.

\subsection{Two-Layer Partitions with Safe and Unsafe Samples}\subseclabel{safe-unsafe}

The linear programs for the splitting hyperplane and hyperellipsoid in the
lifted Hilbert space $\HspaceN{n_c}$ attempt to find the furthest hyperplane or
largest hyperellipsoid. They may be conservative relative to other
options for synthesizing such a boundary due to the curvature of the
Hilbert space or the shape of the separating boundary. 
As a result, misclassification of safe points as unsafe will occur, as seen in Figure \figref{example}.
%these are false positives, or \textit{safe} points that 
Richer surfaces exist that better capture the regions misclassified by the linear and ellipsoidal cutting surfaces.
However, the linear program establishing the model coefficients $\alpha$ will require knowledge of what parts of the mapped space should
be considered \textit{safe} versus \textit{unsafe}, and require
having sufficient samples of both classes.  These samples permit
recovery of more complex cutting surfaces.
%This section considers a
%multi-order polynomial layer for the second layer in the network.

%Thus, instead of seeking a hyperspherical safety volume to split
%the space,  \textcolor{red}{NEED TO REDO. IDEA IS THAT WE SHOULD (1)
%MOVE TO A MORE EXPRESSIVE SECOND LAYER, AND (2) RESTATE THE LINEAR
%PROGRAM TO BE OF A DIFFERENT TYPE. SINCE SECOND LAYER WILL HAVE LINEAR +
%QUADRATIC KERNELS, AT MINIMUM, WE KNOW THAT THE HYPERPLANE AND
%HYPERSPHERE ARE SOLUTIONS.  IF THERE IS A BETTER SOLUTION, IT WILL HAVE
%LESS FALSE POSITIVES DUE TO LINEAR PROGRAM OPTIMIZATION COST}.

The coordinate vectors, $y_i$, for the polynomial kernels in the multi-order polynomial layer should at least include $n_c$ basis unit vectors for each polynomial order (i.e., the minimum LP problem size grows linearly with the
maximum polynomial order). These can be complemented with a specialized or 
task-specific basis expansion criteria. The polynomial kernel mapping is the following,
\begin{equation}\eqlabel{p_vec}
  % \vec{p}(z) = [(y_1^T z + \lambda_{p_1})^{p_1},\ \dots\ ,
  %               (y_{\numKernPoly}^T z + \lambda_{p_{\numKernPoly}})^{p_{\numKernPoly}}]^T
\begin{aligned}
  \vec{p}(z) = [&(y_1^T z + \lambda_1)\ \cdots\  (y_{n_c}^T z + \lambda_1)\ \cdots\ \\
                &(y_1^T z + \lambda_{p_i})^{p_i}\ \cdots\ (y_{n_c}^T z + \lambda_{p_i})^{p_i}\ \cdots]^T,
\end{aligned}
\end{equation}
where $p_i \in \set{1, \dots, \numPoly}$, $\numPoly$ is the maximal polynomial
order used, and $\numKernPoly$ is the total number of polynomial kernels
used across all orders. The minimum basis elements requirement means
that $\numKernPoly \ge \numPoly \centerNum$. The linear-in-$\alpha$ constraint 
equation for a multi-order polynomial surface cut is
\begin{equation}
\alpha^T p(z) = 
  \sum_i \alpha_i \of{y_i^T z + \lambda_{p_i}}^{p_i} = 0.
\end{equation}
Per \cite{MoVe[2020]bf}, the inclusion of positive and negative samples
should involve hard constraints on the \textit{unsafe} points and soft
constraints for the \textit{safe} points
\begin{align} \eqlabel{posnegLP}
  \min_{\alpha\in\Real^{N_p}, \xi\in\Real^{n_s}} \quad & \vec 1\,^T\xi\\
  \nonumber
  \text{s.t.} \quad 
      & \alpha^T \vec{p} \circ \vec k(x_i) \geq 1,\ \forall i\in\setof{l}{x_l\in\sampSet^u}
      \\ \nonumber
      & \alpha^T \vec{p} \circ \vec k(x_j) \leq -1+\xi_j,
      \\  \nonumber
      & \xi_j \geq 0\, \quad \qquad \qquad \forall j\in\setof{l}{x_l\in\sampSet^s}
      \\ \nonumber
\end{align}

\begin{thm} \thmlabel{TLpolySafe}
%\textcolor{red}{Can we say that the multi-order polynomial layer will
%outperform the linear and quadratic linear programs? I think so. If so,
%then restate this theorem, show how linear+ quadratic solution is in
%candidate solution space, and confirm that LP will find something better.}
Defining the two layer neural network $f = \vec{p} \circ \vec k$ with
$\numPoly \ge 2$ and performing a hyper-planar partitioning of the space in
the output space provides an equivalent or better splitting of the
space, based on {\tsafe} misclassification counts. 
%in that $\safeS_1 \subset \safeS_2 \subset \safeS \subset \setSafe \subset \subD$.
\end{thm}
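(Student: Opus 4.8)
\emph{Proof strategy.} The plan is to argue by nesting of hypothesis classes rather than by re-solving a linear program. Concretely, I would show that every hyper-planar partition of the output space attainable with maximal polynomial order $m$ is also attainable with any order $\numPoly > m$, realized by the \emph{same} per-sample values of the decision function, and hence with exactly the same \textit{safe}/\textit{unsafe} classification. Monotonicity of the attainable \textit{safe}-misclassification count in $\numPoly$ is then immediate, and since $\numPoly \ge 2$ includes the affine ($m=1$) and quadratic ($m=2$) cases treated in~\eqref{linProgSL} and~\eqref{linProgTL}, the stated ``equivalent or better'' conclusion follows.

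First I would make the inclusion explicit, holding the kernel embedding $\kernMap$ (the $\epsilon$-covering of $\sampSetUnsafe$) and the offsets $\lambda_{p_i}$ fixed throughout the comparison. By construction, the polynomial map~\eqref{p_vec} contains the $\centerNum$ unit coordinate vectors $y_i = e_i$ for \emph{each} order $p_i \in \set{1, \dots, \numPoly}$ --- this is exactly the $\numKernPoly \ge \numPoly\,\centerNum$ minimum-basis requirement. Hence, for a fixed $z = \kernMap(x) \in \HspaceN{\centerNum}$, the coordinates of $\vec{p}(z)$ include the affine functions $z_k + \lambda_1$ and the quadratics $(z_k + \lambda_2)^2$, $k = 1,\dots,\centerNum$. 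Restricting $\alpha$ to the order-one coordinates gives $\alpha^T\vec{p}(z) = \sum_k \alpha_k z_k + \lambda_1 \sum_k \alpha_k$, an arbitrary affine function of $z$ --- precisely the cut optimized in~\eqref{linProgSL} (whose extra constraint $\alpha_j \ge 0$ only shrinks that class); restricting $\alpha$ to the order-two coordinates reproduces the quadratic cut of~\eqref{linProgTL}. So any order-$m$ cut lifts to an order-$\numPoly$ cut by zero-padding the higher-order coefficients.

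Next I would transfer feasibility and objective value into~\eqref{posnegLP}. Given a feasible $(\alpha^{(m)}, \xi^{(m)})$ of the order-$m$ problem, its zero-padded lift $(\alpha, \xi^{(m)})$ produces the identical decision value $\alpha^T f(x)$ at every sample $x \in \sampSetSafe \cup \sampSetUnsafe$ (with $f = \vec{p}\circ\kernMap$); the hard \textit{unsafe} constraints and the soft \textit{safe} constraints therefore hold with the same slacks, and the objective $\vec 1\,^T\xi$ is unchanged. For any such cut, the minimal slack choice $\xi_j = \max\set{0,\,1 + \alpha^T f(x_j)}$ gives $\xi_j = 0$ on correctly margin-classified \textit{safe} points and $\xi_j \ge 1$ on misclassified ones, so the \textit{safe}-misclassification count equals $\card{\setof{j}{\xi_j \ge 1}}$ and is carried over verbatim by the lift. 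Hence the best count attainable at order $\numPoly \ge 2$ is no larger than at orders $1$ and $2$, which is the claim, and the same padding argument shows it is nonincreasing in $\numPoly$.

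The main obstacle is that~\eqref{posnegLP} minimizes the surrogate $\vec 1\,^T\xi$ rather than the $0$--$1$ misclassification count, so ``no worse slack sum'' does not by itself yield ``no more misclassifications.'' I would resolve this exactly as above, by framing the comparison at the level of realizable \emph{boundaries}: the richer model reproduces any boundary the simpler model produces with identical per-sample values, hence an identical count, so the simpler optimum is always matched and any further reduction the LP discovers over the larger class is genuine. A secondary point to verify carefully is that the lift leaves every sample value fixed --- i.e., that the enlarged constraint set forces no re-optimization of the shared lower-order coefficients --- which holds because the order-$\numPoly$ feasible set is a superset of the embedded order-$m$ one, a mild feasibility bookkeeping check. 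With these two points settled, the nesting argument closes the proof.
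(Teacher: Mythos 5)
Your proposal follows essentially the same route as the paper's proof: the multi-order polynomial layer contains the affine and quadratic cuts as special cases, so zero-padding the higher-order coefficients lifts any lower-order feasible point into \eqref{posnegLP} with identical per-sample values and slacks, and the slack-minimizing LP can therefore only match or improve upon it. If anything you are more careful than the paper's own brief argument, which glosses over exactly the point you flag --- that a smaller slack sum $\vec 1\,^T\xi$ does not by itself force fewer \textit{safe} misclassifications --- so your reading of the claim at the level of attainable boundaries (the best achievable misclassification count is nonincreasing in $\numPoly$) is the cleanest way to make the stated conclusion precise.
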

\begin{proof}
The solution space for \eqref{posnegLP} contains the linear and
quadratic cases by setting the appropriate coefficients $\beta_i$ to
zero. The optimization problem will either return one of these options
or it will find a better one. By virtue of having a slack minimizing
cost, better solutions will involve either the same slack or smaller.
If more slack variables evaluate to zero, there will be
less \textit{safe} point classification errors.
\end{proof}

The original two-layer formulation in \cite{MoVe[2020]bf} uses a 2-layer
GRBF network created using a kernel-SVM process for the second layer.
The Gaussian kernel in the second layer defines an alternative nonlinear
space for generating the cutting surface, which may provide a richer cutting surface solution space.  The theorems and corollaries
here provide a more formal analysis of the partitioning properties of
the 2-layer kernel machine, which is an instance of a shallow layer
kernel machine (SKM), and the specification of learning problems on the
output layer that are linear program formulations with a non-empty
solution space. Similar results apply for a Gaussian kernel second layer.

\begin{figure}[t]
  \setlength{\bht}{1in}
  \setlength{\xsh}{3pt}
  \setlength{\ysh}{-3pt}
  \setlength{\ylabs}{-6pt}
  \vspace*{5pt}
  \begin{tikzpicture}[inner sep=0pt, outer sep=0pt]
    % Top row consisting of input domain plots.
    \node(fig_a) at (0in,0in)
      {\includegraphics[width=\bht]{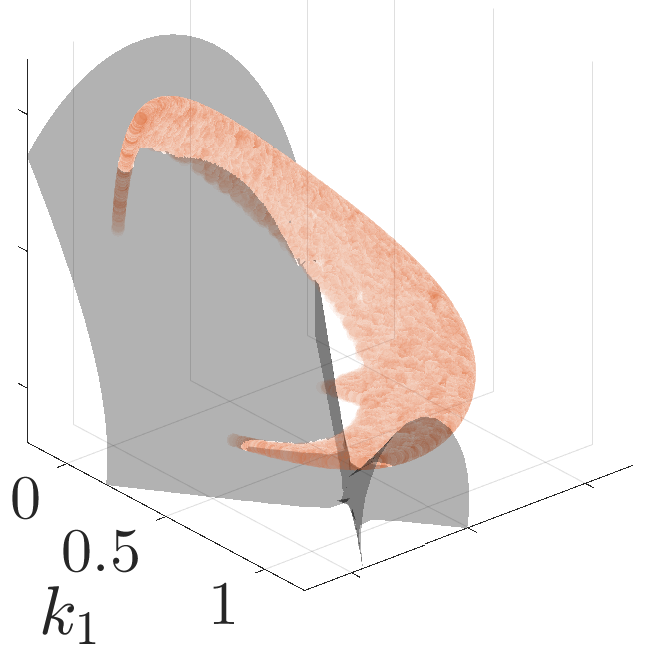}};
    \node[anchor=west] (fig_b) at ([xshift=\xsh]fig_a.east)
      {\includegraphics[width=\bht]{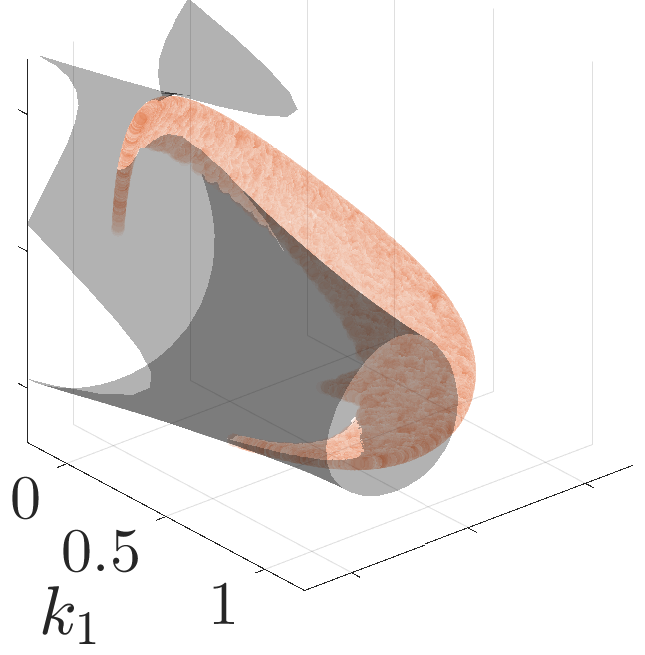}};
    \node[anchor=west] (fig_c) at ([xshift=\xsh]fig_b.east)
      {\includegraphics[width=1.3in,clip=true]{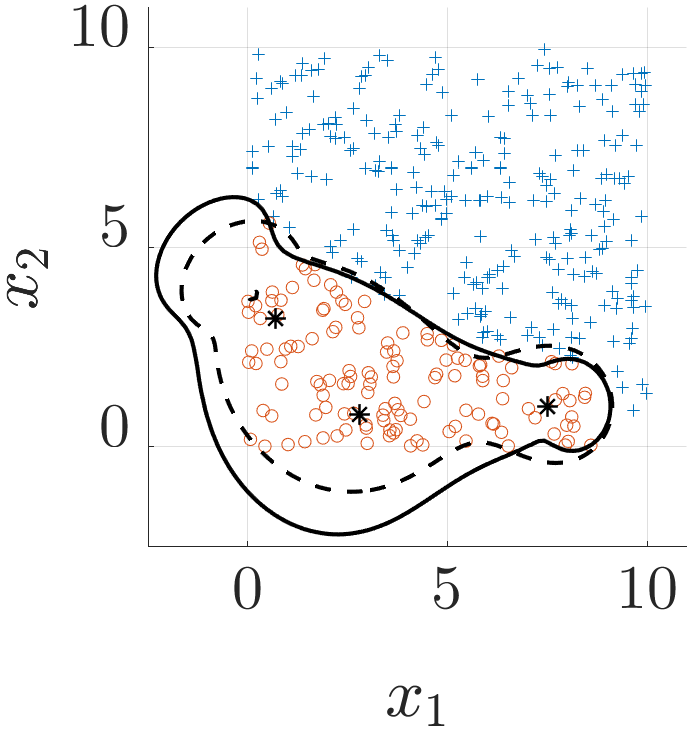}};
      
    \node[anchor=north,xshift=-3pt,yshift=-17] at (fig_a.south)
      {(a)};
    \node[anchor=north,xshift=-3pt,yshift=-17]  at (fig_b.south)
      {(b)};
    \node[anchor=north,xshift=-3pt,yshift=\ylabs] at (fig_c.south)
      {(c)};

    \node[anchor=south] at (fig_a.north){$n_p=2$};
    \node[anchor=south] at (fig_b.north){$n_p=3$};

    \node[anchor=west,yshift=-1in] at (fig_a.west)
      {\includegraphics[width=0.7\linewidth]{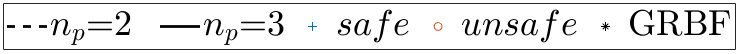}};
    
  \end{tikzpicture}
  
 \caption{Depiction of cutting surfaces and separating curves for two-layer, multi-order polynomial partitions using safe and unsafe data points. 
   % (a) Case $\numPoly=2$ (black surface); (b) Case $\numPoly=3$ (black surface); 
   % (c) Separating curves for $\numPoly=2$ (dashed line) and $\numPoly=3$ (solid line) in $\subD$.
   \figlabel{exampleTL}}
\end{figure}

\paragraph{Discussion}
The provision of negative ({\tsafe}) samples changes the structure of the problem by adding a point set complementary to the unsafe set, deemed to be safe.
Incorporating this data into the optimization problem adds
information about what regions of the Hilbert space $\HspaceN{n_c}$
should lie on the negative side of the boundary,
thereby pushing the boundary outwards when there are negative samples on
the other side of the hyperplanar or hyperellipsoidal splitting (based on
available degrees of freedom in the polynomial kernel layer).
Solutions should improve the accuracy of the estimated bound and reduce the quantity of {\tsafe} misclassification errors, per Theorem \thmref{TLpolySafe}.

Figure \figref{exampleTL} depicts the cutting surfaces for the case of $\numPoly=2$
(left plot) and $\numPoly=3$ (middle plot), and their input domain boundaries (right plot) 
for the same example problem data of Figure \figref{example}. Both surfaces were generated using positive and negative samples per \eqref{posnegLP}. There are less misclassified \textit{safe} data points, with the $\numPoly=3$ case achieving zero slack cost.

\paragraph{Maximally Flexible Shallow Network Design}
The constructions to date are  based on a covering of the positive set 
(\textit{unsafe} set).  For potentially more accurate results, additional 
basis centers can be chosen from the negative (\textit{safe}) set. 
Extremizing coordinates associated to these centers indicate movement
towards \textit{safe} regions and away from \textit{unsafe}. 
Their inclusion changes the structure of the optimization since these are
points in the Hilbert space surface to avoid. The optimization problem 
specification should seek to generate a cutting surface that carves out such regions. 
Because of the flipped nature, a bias term will be needed. The bias term can be easily appended to the multi-order polynomial layer mapping, updating $\vec{p}$ from \eqref{p_vec} to 
\begin{equation}\eqlabel{p_vec_bias}
\begin{split}
  \vec{p}(z) = [&(y_1^T z + \lambda_1)\ \cdots\  (y_{n_c}^T z + \lambda_1)\ \cdots\ \\
                &(y_1^T z + \lambda_{p_i})^{p_i}\ \cdots\ (y_{n_c}^T z + \lambda_{p_i})^{p_i}\ \cdots\ 1]^T,
\end{split}
\end{equation}
where, $p_i\in\{1,\dots,n_p\}$. 
The optimization formulation \eqref{posnegLP} still applies.
Repeating the earlier analysis will show that \eqref{posnegLP} has
at least one point in the solution space for separating hyperplanes (and one for separating hyperellipsoids). Consequently, \eqref{posnegLP} has a non-empty solution space.

%\paragraph{Flatness} The hyperplanar splitting will only occur if the
%surface in the output space is flat. For the Gaussian kernel that can never
%happen, based on the corollaries.  Most likely for any nonlinear kernel,
%it can never happen.  This means that no matter how good the hyperplanar
%splitting is, it can never precisely split the domain into safe and
%unsafe.  All planar splitting will have false negatives (i.e., points
%deemed to be unsafe that are safe). Any splitting that reduces the false
%negatives must admit false positives (i.e., points deemed to be safe
%that are unsafe). It is not clear if there is a well-behaved kernel
%mapping for the second layer that has better splitting properties. The
%next section describes what is needed.

\begin{figure*}[t]
  \centering
  % \newlength{\xsh}
  % \newlength{\bht}
  % \newlength{\ylabs}
  \setlength{\bht}{1.15in}
  \setlength{\xsh}{3pt}
  \setlength{\ylabs}{0pt}
  \vspace*{5pt}
  \begin{tikzpicture}[inner sep=0pt, outer sep=0pt]
    \node (BLin) at (0in,0in)
      {\includegraphics[height=\bht]{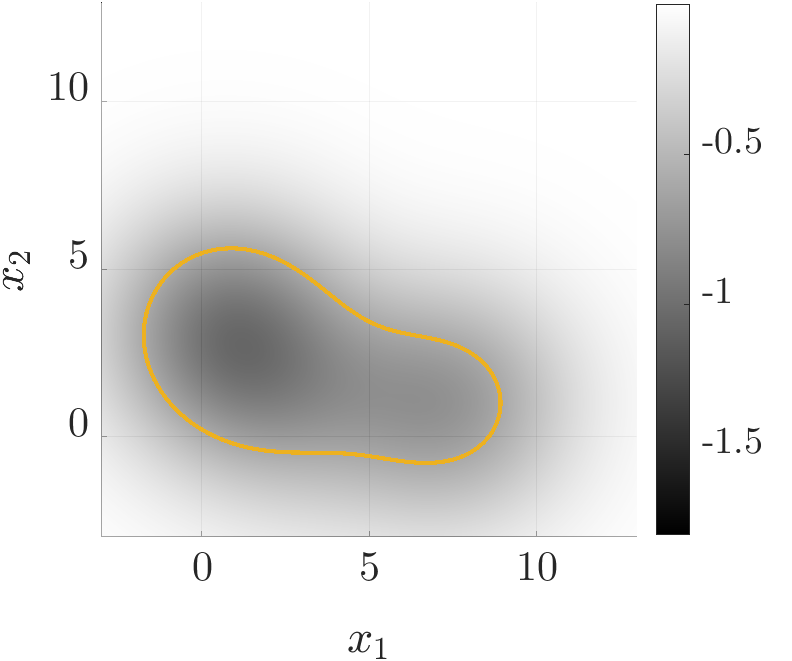}};
    \node[anchor=west] (BQuad) at ([xshift=\xsh]BLin.east)
      {\includegraphics[height=\bht,clip=true,trim=0.675in 0in 0in 0in]{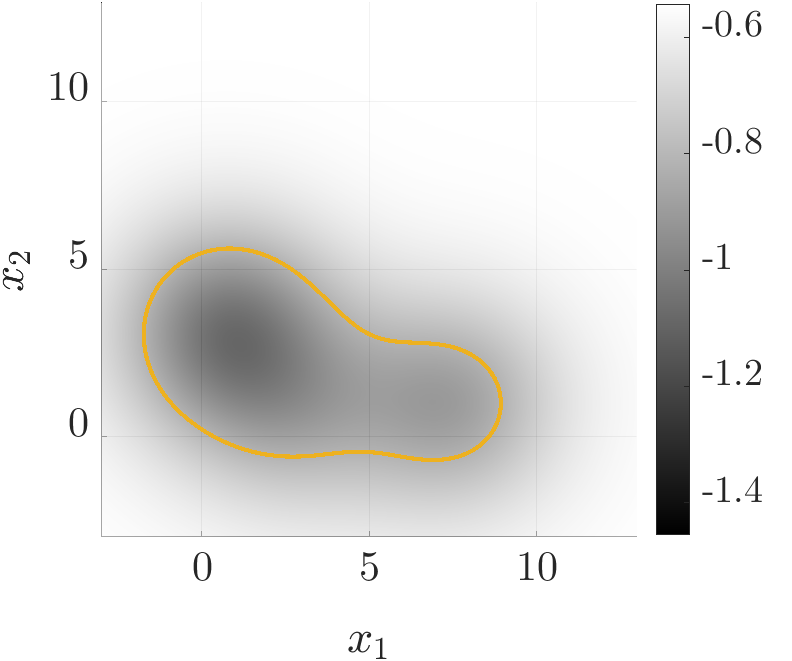}};
    \node[anchor=west] (B2Quad) at ([xshift=\xsh]BQuad.east)
      {\includegraphics[height=\bht,clip=true,trim=0.675in 0in 0in 0in]{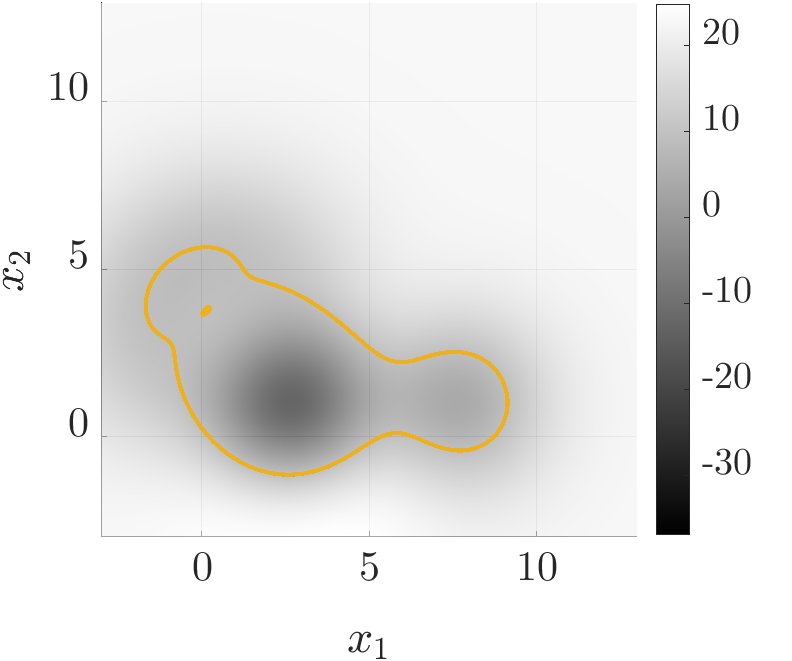}};
    \node[anchor=west] (B2Cube) at ([xshift=\xsh]B2Quad.east)
      {\includegraphics[height=\bht,clip=true,trim=0.675in 0in 0in 0in]{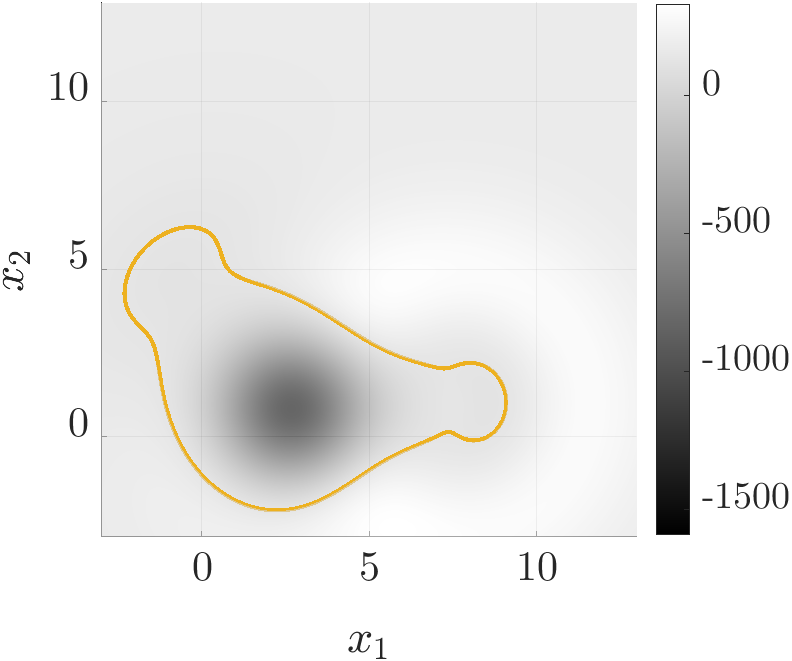}};
    \node[anchor=west] (B2Quad_rbf) at ([xshift=\xsh]B2Cube.east)
      {\includegraphics[height=\bht,clip=true,trim=1.4in 0in 0in 0in]{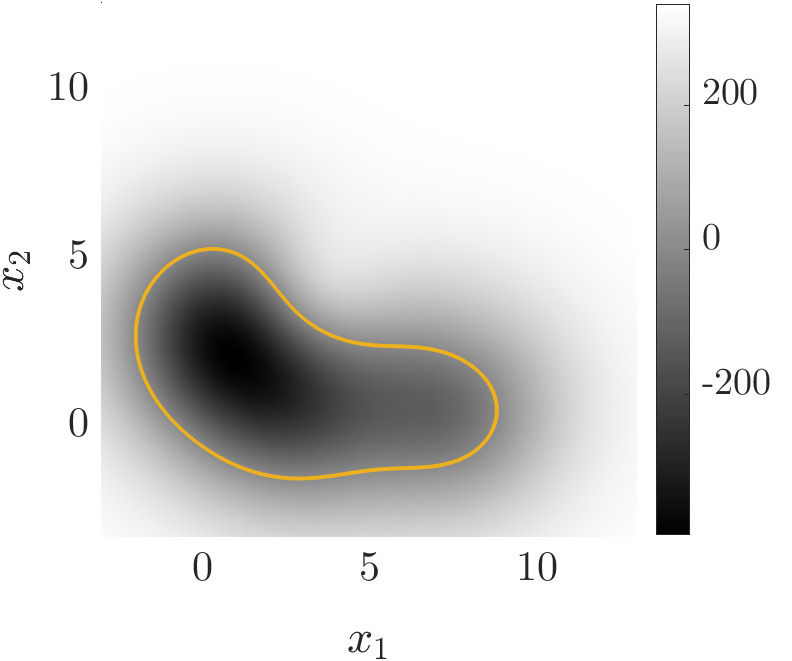}};

    \node[anchor=south,xshift=-3pt,yshift=\ylabs] at (BLin.north)
      {\small Linear};
    \node[anchor=south,xshift=-9pt,yshift=\ylabs] at (BQuad.north)
      {\small Quadratic};
    \node[anchor=south,xshift=-9pt,yshift=\ylabs] at (B2Quad.north)
      {\small Multi-Poly-2};
    \node[anchor=south,xshift=-9pt,yshift=\ylabs] at (B2Cube.north)
      {\small Multi-Poly-3};
    \node[anchor=south,xshift=-9pt,yshift=0] at (B2Quad_rbf.north)
      {\small Multi-Scale-Poly-2};

    \node[anchor=north,xshift=-3pt,yshift=-3pt] at (BLin.south)
      {\small (a)};
    \node[anchor=north,xshift=-9pt,yshift=-3pt] at (BQuad.south)
      {\small (b)};
    \node[anchor=north,xshift=-9pt,yshift=-3pt] at (B2Quad.south)
      {\small (c)};
    \node[anchor=north,xshift=-9pt,yshift=-3pt] at (B2Cube.south)
      {\small (d)};
    \node[anchor=north,xshift=-9pt,yshift=-3pt] at (B2Quad_rbf.south)
      {\small (e)};
  \end{tikzpicture}
  \caption{
  % Barrier function level-set function plots with the separating boundary
  % plotted for different {\SKMBF} constructions.
  % For the LP specification, Linear and Quadratic use unsafe points, and
  % all Multi-XX use safe and unsafe data points.
  Depiction of {\SKMBF} values as a color map along with its zero level-set (yellow contour) for different network constructions. (a) and (b) are replica of Figure \figref{example}(b,c) (top row); (c) and (d) are replica of Figure \figref{exampleTL}(c); (e) is the same as (c) but with multi-scale Gaussian kernels.
  \figlabel{ZBFplots}
  % Surface plots of the barrier functions associated to the
  % two-layer networks whose outputs are depicted in Figure
  % \figref{exampleTL}}. \textcolor{red}{Do we do this for all example
  % cases? Two-column figure across?
  }
  \vspace*{-1.5em}
\end{figure*}

\section{Construction of the Barrier Function \label{sec:buildMap}}
The previous sections described methods to
synthesize level-set functions for approximating the boundary of the safe set from labeled, sampled data.
The level-set functions have the necessary properties to serve as {\ZBFlong}s after a constant shift and rescaling. 
We call the barrier function designed from concatenated kernel machines a \textit{\SKMBFlong} (\textit{\SKMBF}).
An additional extended class $\mathcal{K}$ outer function 
$\psi \in \mathcal{K}_e$ may be needed to adjust the sensitivity based
on the application, e.g., $h(x) = \psi_0 \circ \bar h(x)$, where $\bar
h(\cdot)$ is the \textit{\SKMBF}. 
This section connects the cutting surface layer with ZBF properties and
concludes with some practical considerations for implementing the
\textit{\SKMBF}.

%\textcolor{red}{This section will discuss the suitability of the proposed kernel machine network to serve as zeroing barrier function (ZBF), which we will call \textit{\SKMBFlong} (\textit{\SKMBF}). First, to adhere to the sign convection of ZBFs, the constructed kernel machine network must be negated. Second, the -1 level-set is the closest boundary to the unsafe region; thus shifting is needed. As a result, the {\SKMBF} is $h(x)=-\vec p\circ \vec k(x)+1$.}

\subsection{Suitability as Zeroing Barrier Functions}
To serve for safety-critical control applications, there are conditions on the
level-set function to be a valid {\ZBF}.
%These are discussed in relation to the {\SKMBF}s.

\paragraph{Monotonicity} 
ZBFs must be monotonic when evaluated across the boundary. 
  The specification of the boundary as a cutting surface constraint means that the domain surface and the constraint surface are transverse to each other. The intersection of the two surfaces occur at the zero level-set. Movement on the (Hilbert space) domain surface away from the boundary necessarily has monotonic behavior, locally. 
  The hard constraint in \eqref{posnegLP} for the unsafe samples ensures local monotonicity, at least from the -1 to +1 level-sets (0 to +2 for the equivalent {\SKMBF}) and for some non-trivial band away from their boundaries. Figure \figref{ZBFplots} depicts {\SKMBF} as a color map for various architectures, all of which exhibit monotonicity across the boundary.
%\textcolor{blue}{
  %The specification of the second layer problem as margin maximization
  %means that the resulting two layer neural network computes a signed
  %margin error output value whose properties align with those required
  %by {\ZBF}s and operate like a re-scaled signed distance function from
  %the boundary where the power scaling on the distance depends on the
  %non-zero coefficients of the polynomial layer. }
%\textcolor{red}{
%}

\paragraph{Continuously Differentiable}
  The canonical control barrier function constraint used for safety
  depends on the gradient of the barrier function~\cite{Aa[2019]bf}.
  %\begin{equation} \eqlabel{CBFConstraint}
  %  \dot{h}(x(t)) + \gamma(h(x(t))) \geq 0 .
  %\end{equation}
  %where $\gamma : \mathbb{R} \rightarrow \mathbb{R}$ is a locally Lipschitz extended class-$\mathcal{K}$ function.
  The function needs to be continuously differentiable
  to avoid undesirable behavior in the constraint.  The Gaussian and
  polynomial layers consist of smooth basis functions, thus their
  composition and linear combination is also smooth, thereby resulting in smooth gradients of the {\ZBF}. The gradient has a closed form solution based on equations \eqref{kernMap} and \eqref{p_vec_bias} which can be used in the CBF constraint~\cite{Aa[2019]bf} to enforce safety.
 
\paragraph{Dead Gradients}
  Based on the CBF constraint discussed in~\cite{Aa[2019]bf}, vanishing gradients for $h(x)$ for $x \in \mathcal{D}$ may lead to loss of control~\cite{mohitCDC2019}, which is undesirable in safety critical applications. Monotonicity in the vicinity of the zero level-set guarantees no vanishing gradients in that region.
  Due to the nature of the first layer and the mapping of $\domain$ to a
  compact surface in $\HspaceN{\centerNum}$, the function has known
  limits, which give local extrema.  One such will be the output of $z =
  0$ (points at infinity in the original input space). These points will
  tend to the same {\ZBF} value (visible as nearly constant regions in
  the level-set plots of Fig. \figref{ZBFplots}). 
  At the other extreme will be the center locations; they locally
  maximize distance from $z = 0$ in $\HspaceN{\centerNum}$, and may
  have similar extremizing properties relative to the transverse cutting
  surface.  These points may be local extrema of the {\ZBF}, in which
  case the gradients will vanish there (visible as local minima in the
  level-set plots of Fig. \figref{ZBFplots}).  
  %Empirically, these cases occur for center points sufficiently far from the boundary that they could be considered unconstrained in terms of candidate control directions for certain tasks.  
  Since the focus of this {\SKMBF}
  construction is on spatially meaningful or location-based {\ZBF}s
  related to navigation, the existence of the aforementioned extrema on
  sets of measure zero do not have a strong impact as for other use
  cases of {\ZBF}s. Due to the possible topology of \textit{unsafe}
  space, attempts to define one signed distance function for all
  disconnected regions will result in an extrema set of zero measure
  (related to junctions in the Voronoi partition).

 %\textcolor{red}{
 %Although higher order polynomial kernels have vanishing gradients only at fixed infinitesimal points (turning points), the first order polynomial kernels do not have vanishing gradients.
 %Having a single non-vanishing gradient kernel in the second layer is sufficient to mitigate vanishing gradients due to the second layer. 
 %For the first layer, Gaussian kernels have vanishing gradients only at the edges, i.e., away from the center. 
 %However, Gaussian kernel centering will normally cover the unsafe samples; thus, the first layer will have non-vanishing gradients at regions near the unsafe samples in the input domain. 
 %Further, in the safe region, Gaussian kernels can be distributed, for example uniformly, to ensure that at least one Gaussian kernel in the first late has a non-vanishing gradient.
 %}%Depending on the application, if non-vanishing gradient is needed in regions far from the unsafe samples, Gaussian kernels can then be centered at those regions.}

\paragraph{Guarantees on the Safe Set} 
The ability to capture the interface between \textit{safe} and
\textit{unsafe} regions is a function of the measurement density.
There is an operating assumption of an $\epsilon$-covering for the {\em
unsafe} regions where the $\epsilon$ value is related to the bandwidth
of the Gaussian kernels, $\sigma$.
If the sensor resolution is too coarse to capture boundary
variation, then the {\em unsafe} set may not lie within the resulting
covering and the asserted guarantees cannot hold. It is impossible to
guarantee safety when the sensors cannot measure or sample the space as
needed (unless it is known how much the local structure can
change for small changes in sample location).

The optimization problem guarantees that $h(\sampSetUnsafe) \subset
\Real^-$, which implies the existence of a point-dependent closed
covering $\coverSetClosed(\sampSetUnsafe)$ of the unsafe points for which
$h(\coverSetClosed(\sampSetUnsafe)) \subset \Real^-$.
Assuming that the sensor can indeed measure and capture the local
structure, and the GRBF layer reflects it, then 
$\setUnsafe \subset \coverSetClosed(\sampSetUnsafe) \subset \subD$ 
and 
$\safeS \subset \safeS^* \subset \subD$. 

%ACTUALLY, IT MIGHT ONLY HOLD FOR THE DATA
%POINTS PLUS SOME $\epsilon$-COVERING OF THOSE DATA POINTS. Would be a
%much stronger assertion.

%\textcolor{red}{TO ADDRESS: Practical consideration: holes in safe set
%interior (they are OK if it is a CBF). Likewise, far safe points mapping
%close to boundary. Might be possible to avoid by adding virtual safe
%points with hard constraints (like origin in $\HspaceN{\centerNum}$.}

%If this is the case, then the detailed geometry and properties provide
%awareness as to how the situation can be remediated.

\subsection{Practical Method for Barrier Function Synthesis}
The standard kernel machine mapping consists of a single spatial scale,
which places a limit on the boundary that can be captured
for a given $\epsilon$-cover. This limit is related to the Fourier
spectral properties of the functions generated by the kernel machine
\cite{GiJoPo[1995]RegTheoryNN}.  Multi-scale implementations improve this
limitation \cite{Wendland[2004]}, with each Gaussian center of the first layer having a bandwidth parameter chosen from a finite set. 

In effect, each $c_i$ has a bandwidth $\sigma_i \in \set{\beta_1, \,\dots\,,
\beta_{\numKernRBF}}$ where $\numKernRBF$ is the number of unique
bandwidths used.
For multiple Gaussian bandwidths the $\epsilon$-cover concept changes
to be a \textit{Gaussian kernel cover} extending the single-bandwidth
version. The same equation \eqref{epsCovKM} is used but the equivalent
balls $\mathcal{B}_{\epsilon_i}(x)$ in the original input space will
have differing radii $\epsilon_{i}$ based on the kernel function
bandwidth $\beta_i$ for the $i^{\text{th}}$ kernel coordinate mapping,
similar to how $p_i$ varies for a multi-order polynomial layer.
These (radius variable) balls should cover the space of interest.  
The cover $\coverSet$ from \eqref{epsCovKM} satisfies
\begin{equation}
  \subD \subset \coverSet = \union_{i} \mathcal{B}_{\epsilon_i}(c_i)
  \quad \text{for} \ 
      \epsilon_i \in \set{\varepsilon_1, \,\dots\,, \varepsilon_{\numKernRBF}},
\end{equation}
and each $\varepsilon_j$ depends on $\beta_j$.
The same linear programs for hyperplanar, hyperellipsoidal, and
multi-order polynomial cutting surfaces apply, as do the existence of
at least one solution in the feasible space for each LP specification. Earlier work \cite{KiVeGr[2013]RedSet} established a greedy selection
policy for recovering a domain covering center set $\centerSet$ for
optimized function approximation.  A similar policy applies to the case
of multi-scale, first layer synthesis from data, modified with finer bandwidth
values based on residual error minimization of the coarser single
layer basis functions. Likewise, regression or function recovery from
Gaussian kernel machines operate best when capturing the deviation from
some known parametric model \cite{ChEtAl[2020]LearnJump}, leading to a 
semi-parametric first layer (fixed + data-adaptive elements).
%The example cases will
%note how to build the parametric model based on the task at hand,
%followed by data-driven construction of the additional Gaussian kernel
%centers and bandwidth.
%Additional details will be provided in publicly
%released code at the time of publication.

%Here we have to connect the regression problem to the quadratic ML
%problem.  That is easy because the regression problem is a subset of the
%quadratic ML problem.  It is effectively,
%\begin{equation}
%  B \zeta + c = 0
%\end{equation}
%and should therefore represent a worst case bound on the partitioning quality
%of the quadratic version.  Does it relate to the fact that the solution
%is a quadratic program? What can we say about their solutions? But what
%about the hard constraints?

%\textcolor{red}{Role of the relaxed, least squares problem in shaping
%the solution space for the kernel machine?  Do we even discuss?
%Do we have to state that there exists a pair of penalties for which the
%relaxed problem gives a solution close to the hard/soft constraint
%problem? How do we handle that part?}

%%%%%%%%%%%%%%%%%%%%%%%%%%%%%%%%%%%%%%%%%%%%%%%%%%%%%%%%%%%%%%%%%%%%%%%%%%%%%%%%
\section{Examples and Implementation \label{sec:prob_stat}}

\begin{figure*}[t]
  \centering
  \vspace*{5pt}
  \begin{tikzpicture}[inner sep=0pt, outer sep=0pt]
    \tikzset{putright/.style={anchor=west,xshift=0.02\linewidth}}
    \tikzset{putlabel/.style={anchor=south west,xshift=2pt,yshift=2pt}}
    % Top row consisting of inout domain plots.
    \node (fig_a) at (0in,0in)
      {\frame{\includegraphics[height=0.8in]{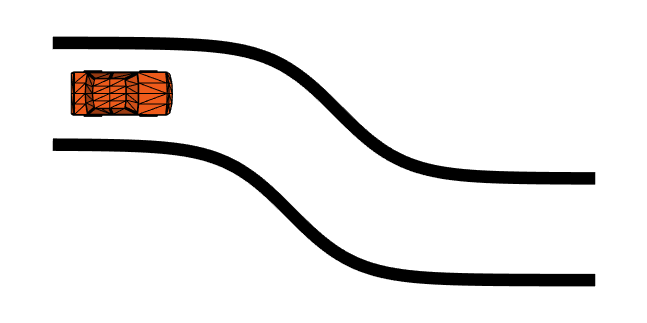}}};
    \node[putright] (fig_b) at (fig_a.east)
      {\frame{\includegraphics[height=0.8in]{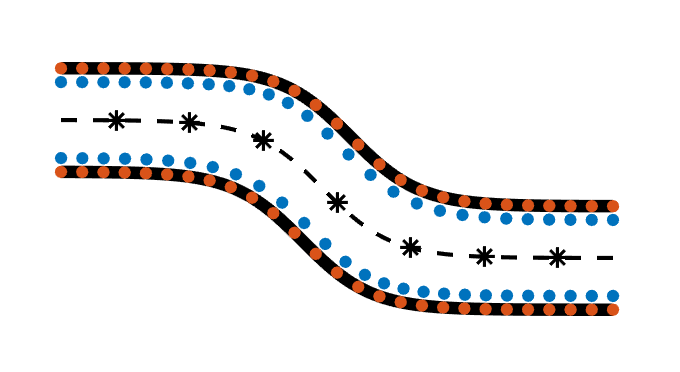}}};

    \node[putlabel] (label_a) at (fig_a.south west){\small (a)};
    \node[putlabel] (label_b) at (fig_b.south west){\small (b)};

    \node[putright] (fig_c) at (fig_b.east)
      {\frame{\includegraphics[height=0.8in]{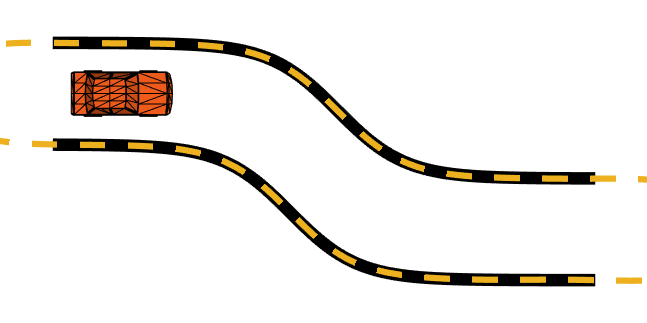}}};
    \node[putright] (fig_d) at (fig_c.east)
      {\frame{\includegraphics[height=0.8in]{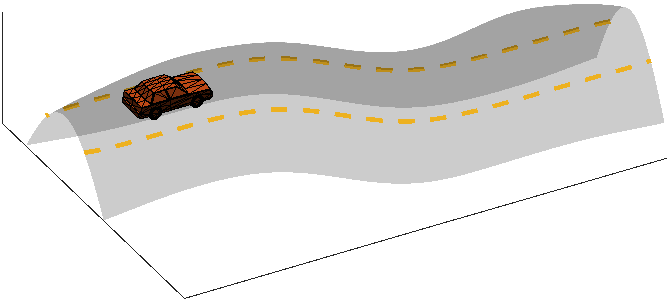}}};

    \node[putlabel] at (fig_c.south west){\small (c)};
    \node[putlabel] at (fig_d.south west){\small (d)};
         
  \end{tikzpicture}
 
 \caption{Curved lane modeling using {\SKMBF}. 
   (a) curved road; 
   (b) road with labels for synthesized safe (blue) and unsafe (red)
     points along with GRBF center locations (black *); 
   (c) synthesized zero level-set of {\SKMBF} (dashed yellow lines); 
   (d) synthesized \textit{\SKMBF} as a 3D surface plus zero level-set.
 \figlabel{laneCase}}

%\end{figure*}
%\begin{figure*}[t]
  \vspace*{5pt} 
  \centering
  \begin{tikzpicture}[inner sep=0pt, outer sep=0pt]
    \tikzset{putright/.style={anchor=west,xshift=5pt,draw,dashed,thick}}
    \tikzset{lastone/.style={anchor=west,xshift=0.02\linewidth,draw,thick}}
    \tikzset{putlabel/.style={anchor=south west,xshift=3pt,yshift=3pt}}

    \node[draw,thick,dashed] (fig_a) at (0in,0in)
        {\includegraphics[height=1.05in,angle=-90]{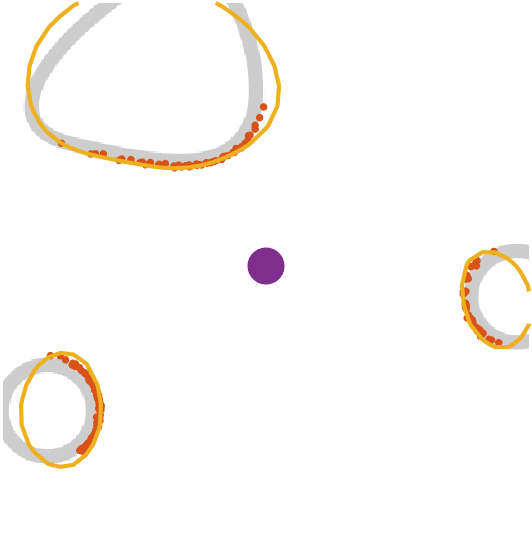}};

    \node[putright] (fig_b) at (fig_a.east)
        {\includegraphics[height=1.05in,angle=-90]{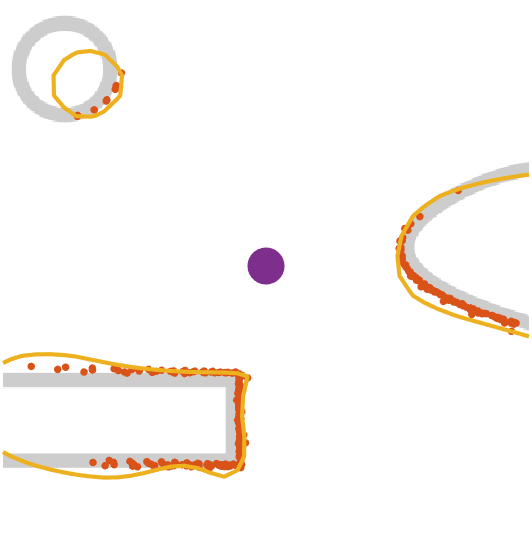}};

    \node[putright] (fig_c) at (fig_b.east)
        {\includegraphics[height=1.05in,angle=-90]{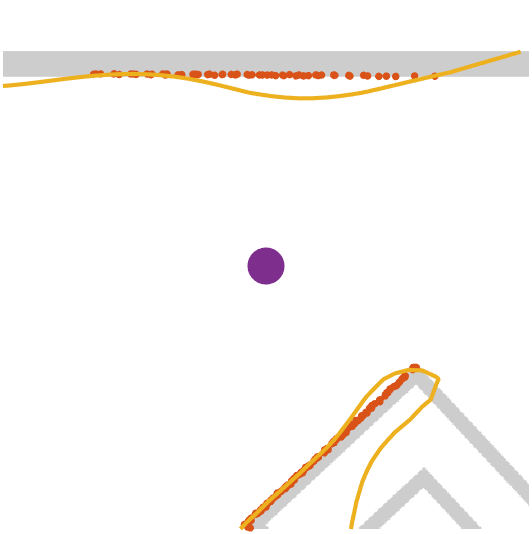}};

    \node[anchor=east,xshift=-5pt] (fig_d) at (fig_a.west)
        {{\includegraphics[width=1.05in,angle=-90]{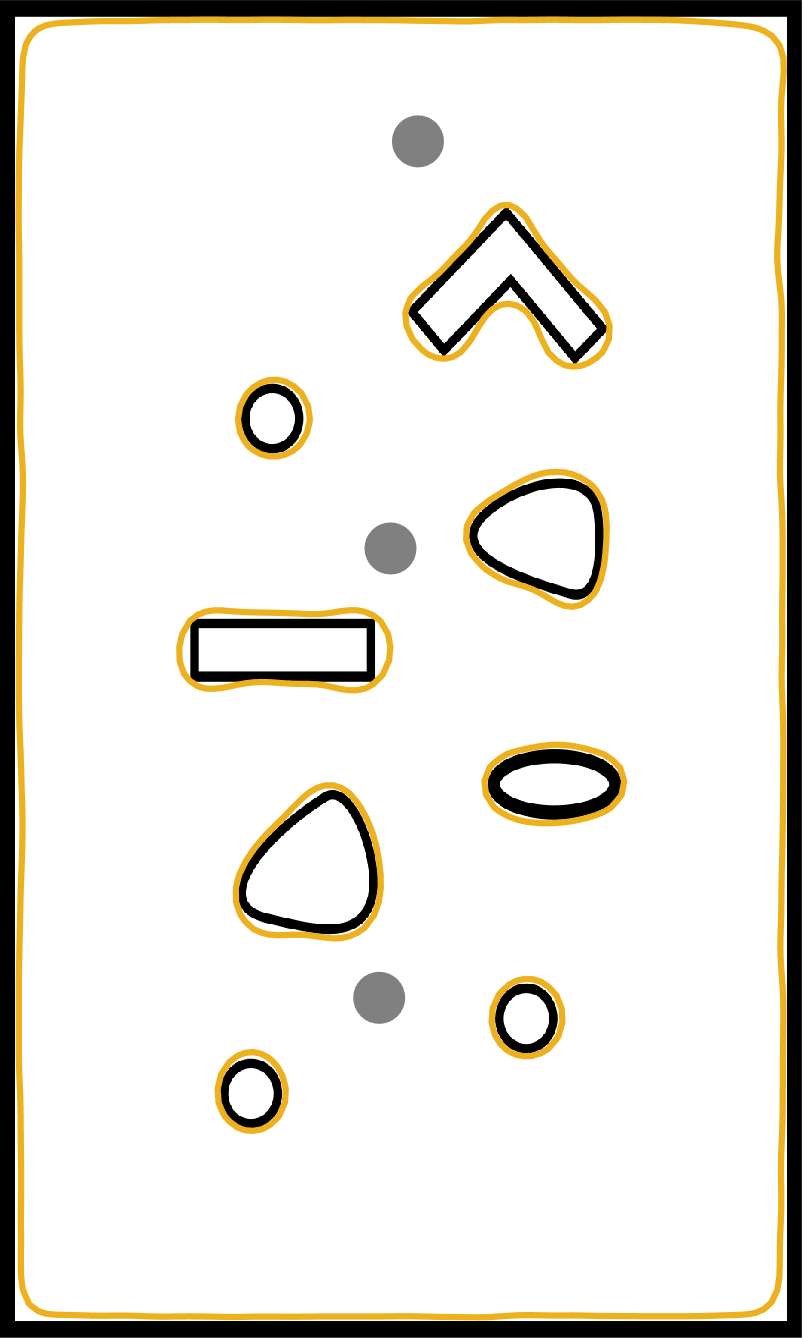}}};
    \node[anchor=east,xshift=-5pt] (fig_e) at (fig_d.west)
        {{\includegraphics[width=1.05in,angle=-90]{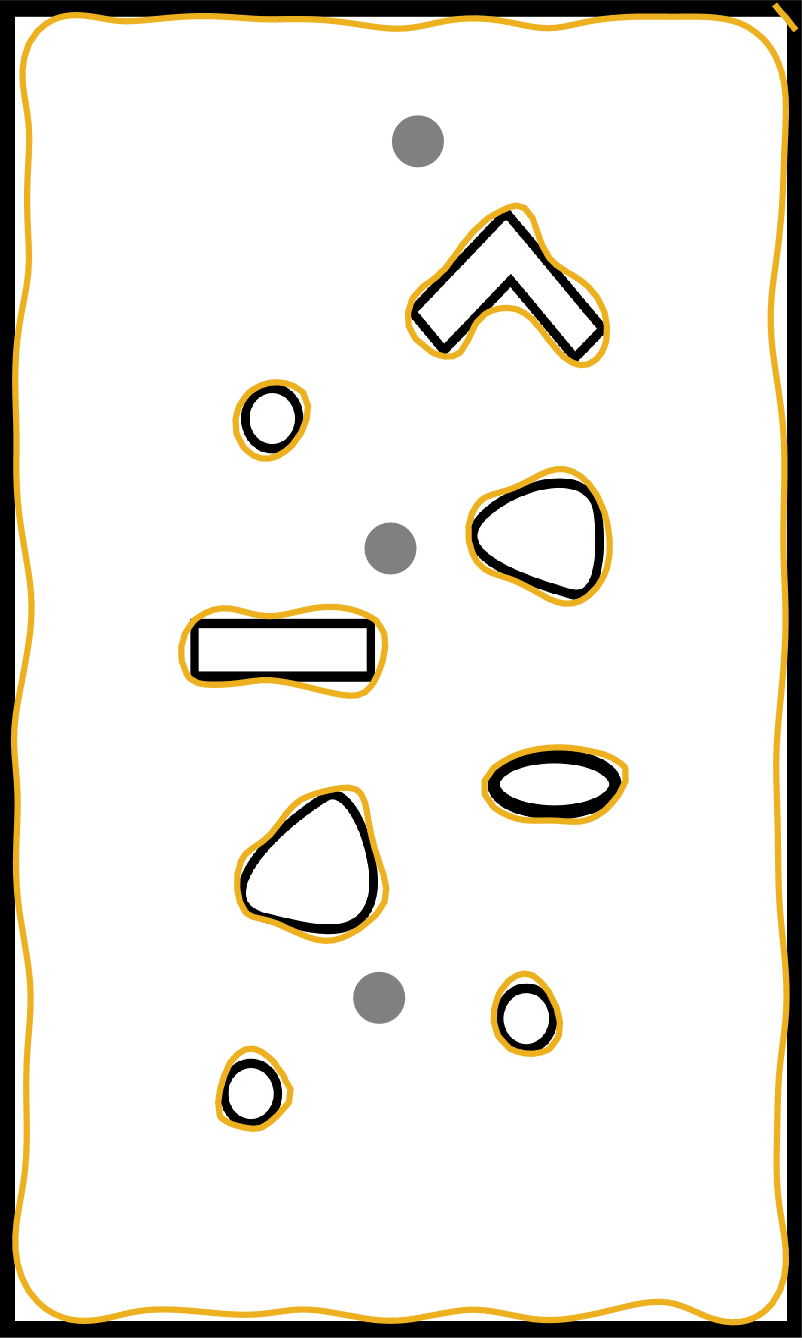}}};

    \node[putlabel] at (fig_a.south west){\small (c)};
    \node[putlabel] at (fig_b.south west){\small (d)};
    \node[putlabel] at (fig_c.south west){\small (e)};
    \node[putlabel] at (fig_d.south west){\small (b)};
    \node[putlabel] at (fig_e.south west){\small (a)};
         
  \end{tikzpicture}
 \caption{Mobile robot planar navigation problem. Synthesis of a global {\SKMBF} using (a) multi-polynomial second layer and (b) Gaussian kernel second layer, to define unsafe region boundaries (yellow contours). Local navigation {\SKMBF} generated at three marked locations (gray dots) in the global maps, are depicted in (c), (d), and (e).  Orange dots are LiDAR points, the purple circle is the robot's location, and the yellow contour is the {\SKMBF} boundary. \figlabel{2DCase}}
 \vspace*{-1.5em}
\end{figure*}

%In this section, we discuss two case studies where the proposed synthesis method generates appropriate {\SKMBF}s.
%In both case studies, the second layer is constructed from a multi-order polynomial kernel layer with $n_p=3$.

\subsection{Curved Lane Modeling}
This case study shows how the proposed method generates a suitable {\SKMBF} that models curved lanes, as shown in the low-speed driving scenario of Figure \figref{laneCase}(a). Closed-form expressions for curved roads can be challenging to determine, and thus, using control barrier function policy synthesis frameworks for such applications can be difficult. To that end, we show {\SKMBF} synthesis alleviates this issue. The safe and unsafe data points used for network training are shown in Figure \figref{laneCase}(b) and are generated synthetically. In practice, such points would be generated from visual sensing and processing, such as from semantic segmentation of data from a forward-facing camera on the car.

A simple strategy for choosing the GRBFs centers is to uniformly place them along the center line of the lane. Seven GRBFs with a bandwidth of $\sigma=2w_r/\sqrt{\log(2)}$ are used in this case study, as shown in Figure \figref{laneCase}(b) by the black (*); where $w_r$ is the width of the road. The synthesized {\SKMBF} boundary is shown in Figure \figref{laneCase}(c) as the yellow dashed line and as a surface in Figure \figref{laneCase}(d). 
% \color{red}
The {\SKMBF} compute time was 10 msec (MATLAB, Ubuntu 20.04, Intel i7-8750H CPU), which supports a 100 Hz update rate.
% \color{black}
%and is sufficiently fast for real-time use.
%, i.e., a 70 Hz update rate can be supported. Note, the average number of training data points for the three local {\SKMBF} synthesis cases was 260.

%\textcolor{red}{Such {\SKMBF} can be readily used in a CBF-QP formulation for driving assist functionalities.}

\subsection{Planar Mobile Robot Navigation in ROS STDR Simulator}
This study involves a differential drive robot equipped with a LiDAR sensor navigating a 2D environment with obstacles, see Figure \figref{2DCase}. 
The robot tracked a predetermined, dynamically feasible, collision-free trajectory. Ideal localization was assumed to be available for the robot. The procedure to generate the dataset of safe and unsafe LiDAR samples was done similar to~\cite{MoVe[2020]bf}.
%Unsafe data points are simply LiDAR measurements with finite values. The safe data points were generated using two methods. In the first method, safe data points are extrapolated from the unsafe points along the LiDAR rays. The extrapolation distance was equal to double the radius of the robot. In the second method, safe data are generated from LiDAR measurements with infinite values, i.e., no obstacles obstruct the associated LiDAR rays. 40\% of those LiDAR rays are used to generate the safe points by setting the safe points to be 50\% or 80\% of LiDAR maximum range along those obstacle-free LiDAR rays. All of those safe and unsafe data point measurements are then transformed from the robot's body frame to the global frame. 
Global and local {\SKMBF} synthesis was performed.  Global synthesis aggregated all of the sensor data and applied either a multi-polynomial second layer, Figure \figref{2DCase}(a), or a Gaussian second layer via kernel SVM optimization \cite{MoVe[2020]bf}, Figure \figref{2DCase}(b). The latter generates a richer solution space and will be more accurate.
The first layer used a fixed $5\times8$ uniform grid.
Local {\SKMBF} synthesis used a fixed robot-centered $4\times4$ grid for
%The Gaussian kernels in the second set were centered, and their bandwidth was turned according to an adaptive strategy described in section IV.B. 
three locations, shown as gray dots in the global maps. 
The zero level-set of the locally and globally synthesized {\SKMBF s} are shown as yellow curves. Notice that the local {\SKMBF} zero level-set cuts into some sides of obstacles that do not have LiDAR measurements. However, once measurements are taken on those sides the {\SKMBF} will include them.

To further improve the accuracy of the {\SKMBF} in capturing the domain, multi-scale kernels can be added to the first layer in addition to the fixed grid kernels. The centers and bandwidths of the multi-scale kernels are turned according to the adaptive strategy described in section IV.B. 
% The first layer is constructed from two sets of Gaussian kernels. Gaussian kernels in the first set are centered on a fixed uniform grid, a $5\times8$ and $4\times4$ grid for the global and local {\SKMBF} synthesis, respectively. 
% The Gaussian kernels in the second set were centered, and their bandwidth was turned according to an adaptive strategy described in section IV.B. 
% \color{red}

The global {\SKMBF} compute time was 3.9 sec for the multi-polynomial {\SKMBF} and 4.3 sec for the Gaussian {\SKMBF} when solved using the SMO solver in the LIBSVM package (dual QP formation takes 18 sec).
The local {\SKMBF} compute time averaged 18 msec, which supports a 50 Hz update rate. The training time of the local {\SKMBF} using the proposed LP formulation is significantly shorter than the training time of neural network-based ZBFs. For example, the training time of a Softplus-based deep neural network via gradient descent for local ZBF synthesis was 98 msec on GPU and 407 msec on CPU \cite{KeNi[2022]cbf_sdf}.
\section{Conclusion}

This paper presented a {\ZBFlong} synthesis method based on a two-layer shallow kernel machine network architecture. 
%to separate safe and unsafe regions in the domain.
The first layer is constructed from Gaussian kernels whose geometry in the associated Hilbert space was analyzed while considering positive samples (unsafe points) only. The analysis provided theoretical guarantees on the existence of a separating hyperplane, synthesized via a linear program (LP), that splits the Hilbert space into safe and unsafe regions via its level-sets.
%The synthesis method for generating the hyperplane took the form of a linear program.
A second layer was added to the network that uses polynomial kernels and negative samples (safe points) to reduce misclassifications, while also maintaining the LP formulation. The cutting surface geometry of this second layer also applies to the previous work \cite{MoVe[2020]bf}.
%In order to generate richer hypersurfaces in the Hilbert space that yield a lower number of misclassifications, while at the same time maintaining the LP formulation for synthesis, 
%Further, the LP was reformulated to consider both positive and negative samples.
%We provided examples to demonstrate the gained improvement in separating the domain due to the addition of the second layer.
Two case studies demonstrate the efficacy of the {\SKMBF} architecture in generating valid {\ZBFlong}s for motion planning applications. Future work will explore optimized construction of {\SKMBF}s for boundary estimation accuracy in navigation contexts.

\bibliographystyle{ieeetr}
\bibliography{local}

\end{document}